\newtheorem{theorem}{\bf Theorem}[section]
\newtheorem{definition}[theorem]{\textsl{\bf Definition}{}}
\newtheorem{lemma}[theorem]{\bf Lemma}
\begin{document}

\title{Quantum teleportation benchmarks for independent and identically-distributed \\ spin states  and displaced thermal states}
\author{M\u{a}d\u{a}lin Gu\c{t}\u{a}}\thanks{Corresponding author. \\ Electronic address: {madalin.guta@nottingham.ac.uk}}
\author{Peter Bowles}
\author{Gerardo Adesso}
\affiliation{{School of Mathematical Sciences, University of Nottingham, University Park, NG7 2RD Nottingham, United Kingdom}}
\date{September 17, 2010}

\pacs{03.67.Hk, 03.65.Wj, 02.50.Tt}

\begin{abstract}

A successful state transfer (or teleportation) experiment must perform better than the benchmark set by  the `best' measure and prepare procedure. We consider the benchmark problem for the following families of states:
 (i) displaced thermal equilibrium states of given temperature;  (ii) independent identically prepared qubits with completely unknown state.
For the first family we show that the optimal procedure is heterodyne measurement followed by the preparation of a coherent state. This procedure was known to be optimal for coherent states  and for squeezed states with the `overlap fidelity' as figure of merit. Here we prove its optimality with respect to the trace norm distance and supremum risk.
For the second problem we consider $n$ i.i.d. spin-$\frac12$ systems in an arbitrary unknown state
$\rho$ and look for the measurement-preparation pair $(M_{n},P_{n})$ for which the reconstructed state $\omega_{n}:=P_{n}\circ M_{n} (\rho^{\otimes n})$ is as close as possible to the input state, i.e. $\|\omega_{n}- \rho^{\otimes n}\|_{1}$ is small.
The figure of merit is based on the trace norm distance between input and output states.
We show that asymptotically with $n$ the this problem is equivalent to the first one.
The proof and construction of $(M_{n},P_{n})$ uses the theory of {\it local asymptotic normality} developed for state estimation which shows that i.i.d. quantum models can be approximated in a strong sense by quantum Gaussian models. The measurement part is identical with `optimal estimation', showing that `benchmarking' and estimation are closely related problems in the asymptotic set-up.
\end{abstract}

\maketitle

\section{Introduction}

Quantum teleportation \cite{Telep} and quantum state storage \cite{Storage} are by now well-established protocols in quantum information science. In both cases  the procedure amounts to mapping one quantum state onto another (at a remote location in the case of teleportation), by making use of
quantum correlations in the form of entanglement or interaction between systems. However, approximate transformations could also be accomplished without any use of quantum correlations, by means of classical `measure and prepare' (MAP) schemes.
Whilst in the ideal case, the entanglement resource gives quantum teleportation a clear advantage in terms of performance, there exists inevitable degradation of the quantum channel in realistic implementations. This has led to a number of investigations into the existence of optimal MAP schemes to locate classical-quantum boundaries and assign precise benchmarks for proving the presence of quantum effects \cite{bfkjmo}. Any experimental demonstration of quantum teleportation and state storage has to perform better than the optimal MAP scheme, to be certified as a truly quantum demonstration. A review of the quantum benchmarks for completely unknown pure input states of $d$-dimensional
systems can be found in \cite{BrussMac}. More recent research has largely focused on benchmarks originating in the context of teleportation and quantum memory for continuous variable (CV) systems \cite{COVAQIAL}, with notable results obtained for transmission of pure and mixed coherent input states, and squeezed states \cite{Hammerer,AdessoC,Namiki,Owari,Calsamiglia}. Beautiful experiments \cite{Furusawa98,memorypolzik,telepolzik,polzikalessio} involving light (Gaussian modes) and matter (coherent and spin-squeezed atomic ensembles) have demonstrated unambiguous quantum teleportation, storage and retrieval of these infinite-dimensional quantum states with a measured `fidelity' between input and output exceeding the benchmark set by the optimal MAP strategy (see also \cite{noteluk,luk}).

In each of the above cases, the benchmarks deal with the case of teleportation or storage of {\it single} input states drawn from a set, in a Bayesian or pointwise set-up. To date, there exist no nontrivial benchmarks for the transmission of multiple copies of quantum states -- a `quantum register' -- in particular for an ensemble of $n$ independent and identically-distributed (i.i.d.) qubits.
Such a task comes as a primitive in distributed quantum communication. Quantum registers can be locally initialised and then transferred to remote processing units where a quantum computation is going to take place. Also, in hybrid interfaces between light and matter \cite{qinternet}, storage and retrieval of  e.g. coherent states, involves mapping the state of $n$ i.i.d. atoms onto a light mode (back and forth). Therefore, strictly speaking, a quantum benchmark for this precise input ensemble would be needed to assess the success of the experiment. In the current practice \cite{memorypolzik,telepolzik} the problem is circumvented by noting that the collective spin components of the atomic ensemble (with $n \sim 10^{12}$ \cite{PolzikNAT}) approximately satisfy canonical commutation relations, henceforth the atomic system is treated {\it a priori} as a CV system, and the corresponding benchmarks are used.

In this paper we put this procedure on firm grounds, by proving rigorously that the optimal MAP scheme for teleportation and storage of $n$ i.i.d. unknown {\it mixed} qubits converges when $n \rightarrow \infty$ to the optimal MAP scheme  for a single-mode displaced thermal state. Additionally, we also prove that the heterodyne measurement followed by the preparation of a coherent state is optimal MAP scheme for displaced thermal
states when the figure of merit is the trace norm distance. The same scheme is known to be optimal for thermal and squeezed states, but for a figure of merit based on overlap fidelity \cite{Owari}.

The key tool in deriving our results is the theory of local asymptotic normality
(LAN) for quantum states
\cite{Guta&Kahn,Guta&Janssens&Kahn,Guta&Kahn2,Guta&Jencova} which is the quantum extension of a fundamental concept in mathematical statistics introduced by Le Cam \cite{LeCam}.  In the classical context this roughly means  that a large i.i.d. sample $X_{1},\dots , X_{n}$ from an unknown distribution contains approximately the same amount of {\it statistical} information as a single sample from a Gaussian (normal) distribution with unknown mean and known variance.


In the quantum case, LAN means that the joint (mixed)
state $\rho_{\theta}^{\otimes n}$ of $n$ identically prepared (finite dimensional) quantum systems can be transferred by means of a quantum channel to a quantum-classical Gaussian state, with asymptotically vanishing loss of statistical information. More precisely, for any fixed $\theta_{0}$ there exist quantum channels $T_{n}, S_{n}$ such that
\begin{eqnarray}
& & \quad \left\|T_{n} \left(\rho_{\theta_{0}+u/\sqrt{n}}^{\otimes n}\right) -
N_{{u}}\otimes \Phi_{{u}} \right \|_{1}  \nonumber \\
&\mbox{and}&  \quad
\left\|\rho_{\theta_{0}+u/\sqrt{n}}^{\otimes n} - S_{n} \left(
N_{{u}}\otimes \Phi_{{u}}\right) \right\|_{1} \nonumber
\end{eqnarray}
converge to zero as $n\to\infty$, {\it uniformly} over a $n^{-1/2+\epsilon}$ local neighbourhood of the state $\rho_{\theta_{0}}$. Here $N_{{u}}$ is a classical normal distribution and $\Phi_{{u}}$ is a Gaussian state on an ensemble of oscillators whose means are linear transformations of ${u}$ and the covariance matrices depend only on $\theta_{0}$. The qubit case is described in detail in Section~\ref{sec.lan} and
the precise result is formulated in Theorem \ref{th.qlan.qubits}.
%

The LAN theory has been used to find asymptotically optimal estimation procedures for qubits \cite{Guta&Janssens&Kahn} and qudits \cite{Guta&Kahn2} and to show that the Holevo bound for state estimation is achievable \cite{Guta&Kahn3}. Here we use it to solve the benchmark problem for qubits by casting it into the corresponding one for displaced thermal states. The following diagram illustrates the asymptotically optimal MAP scheme: the measurement $M_{n}$ consists in composing the channel $T_{n}$ with the heterodyne measurement $H$. The preparation procedure consists in creating the coherent state $|\alpha_{\hat{\vec{u}}}\rangle$ and mapping it back to the qubits space by the channel $S_{n}$. The optimality of the scheme is proved in Theorem \ref{th.main}.
\begin{equation}\label{comm.diagram}
\begin{CD}
\rho^{n}_{\vec{u}} @>M_n>> X_n @>P_n>>\omega(X_{n})           \\
   @V{T_n}VV @. @AA {S_n} A      \\
   \Phi_{\vec{u}} \otimes N_{\vec{u}}
   @> H >> \hat{\vec{u}} @>P>> |\alpha_{\hat{\vec{u}}}  \rangle\langle\alpha_{\hat{\vec{u}}}  |\otimes \delta_{\hat{u}_{3}}   \\
   \end{CD}\end{equation}

The paper is organised as follows.   In Section~\ref{sec.stat.formulation} we give a precise statistical formulation of the benchmark problem, and we explain in some detail the definition of the asymptotic risk. A brief overview of the necessary classical and quantum concepts from the LAN theory is given in Section~\ref{sec.lan}.
 In Section~\ref{sec.thermal} we then revisit the benchmark problem for displaced thermal states. When the figure of merit is the overlap fidelity, the solution was found in \cite{Owari} to be the heterodyne detection followed by the repreparation of a coherent state. We solve the same problem using the trace norm loss function, and again find this MAP scheme to be optimal. Interestingly, the optimality proof is based on the concept of stochastic ordering which was previously used for finding optimal cloning maps for thermal states \cite{Guta&Matsumoto}.
This result allows us to exploit LAN and solve the benchmark problem for $n$ i.i.d. qubits in the asymptotic limit $n\rightarrow\infty$. A constructive solution along with a proof of optimality is given in Section~\ref{sec.qubits}.
We conclude with discussions in Section~\ref{sec.discussion}. The Appendix contains additional mathematical details on the LAN theory for qubits.

\section{Statistical formulation of the benchmark problem}
\label{sec.stat.formulation}
The general statistical formulation of quantum benchmarking is as follows. Let
$$
\mathcal{Q}:= \{\rho_\theta: \theta\in\Theta\}
$$
be a quantum model, i.e. a family of quantum states on a Hilbert space $\mathcal{H}$, indexed by a parameter $\theta\in \Theta$. In this paper $\Theta$ is always an open subset of $\mathbb{R}^{k}$, i.e. we are in a parametric set-up.
The model encodes our prior information about the state and the parameter $\theta$ is considered to be unknown. We are given a quantum system prepared in the state $\rho_{\theta}$ and we would like to find the MAP (or classical) channel $T :\mathcal{T}_{1}(\mathcal{H})\to \mathcal{T}_{1}(\mathcal{H})$ for which $T(\rho_{\theta})$ is `close' to $\rho_\theta$. A MAP channel is the composition
$T= P\circ M$ of a measurement $M$  with outcomes in a measure space
 $(\mathcal{X},\Sigma)$, and a repreparation $P$ which assigns to every result $x\in\mathcal{X}$ a state
 $\rho_x \in \mathcal{T}_1(\mathcal{H})$. There are
several natural distance functions on the state space, such as the Bures distance \cite{Bures} with its associated Uhlmann fidelity \cite{Uhlmann}, or the trace-norm distance considered in this paper. Since $\theta$ is unknown we choose the maximum risk as the overall figure of merit of a scheme $L:=(M,P)$
\begin{equation}\label{eq.max.risk}
R_{\max}(L):= \sup_{\theta\in\Theta} \|T(\rho_{\theta}) -\rho_{\theta}\|_{1}, \qquad T=P\circ M.
\end{equation}
Alternatively one may use a Bayesian risk where the performance at different parameters is weighted by a prior distribution over $\Theta$. The goal is to find a MAP scheme with the lowest risk $R_{\max}(L)$, which will be called a {\it minmax scheme}.

The above formulation is particularly suitable in the case of covariant models such as the displaced thermal state treated in Section~\ref{sec.thermal} or the single qubit case of \cite{Calsamiglia}. Let us consider now the scenario where we want to teleport $n$ identically prepared systems, such as in the case of atomic clouds. The relevant model is then
$$
\mathcal{Q}^{n}:= \{\rho^{\otimes n}_{\theta} :\theta\in\Theta\},
$$
and we would like to find the optimal MAP scheme $L_{n}:= (M_{n},P_{n})$ for a given $n$. The experience accumulated in the related domain of state estimation indicates that this problem can rarely be solved explicitly \cite{Bagan&Gill}, but becomes tractable in an asymptotic framework
\cite{Guta&Janssens&Kahn,Guta&Kahn2}. We adopt this set-up in our benchmark problem and define the local risk of a procedure
$ L_{n}:= (M_{n},P_{n})$, around $\theta_{0}\in \Theta$ by
$$
R_{\max}(L_{n}; \theta_{0}):=
\sup_{ \| \rho_{\theta} - \rho_{\theta_{0}} \|_{1}\leq n^{-1/2+\epsilon}}
\|T_{n}(\rho_{\theta}^{\otimes n}) -\rho_{\theta}^{\otimes n}\|_{1}.
$$
where unlike  formula \eqref{eq.max.risk} we take the maximum over a neighbourhood of $\theta_{0}$ of size $n^{-1/2+\epsilon}$ with $0<\epsilon \ll 1$ arbitrary.

The {\it asymptotic local risk} of the sequence $\{L_{n}:= (M_{n},P_{n}):n\in \mathbb{N}\}$ is defined by
\begin{equation}\label{eq.local.asymptotic.risk}
R(\{L_{n} : n\in N\}; \theta_{0}):= \limsup_{n\to\infty} R_{\max}(L_{n};\theta_{0}),
\end{equation}
and by definition a MAP sequence $\{ L_{n} : n\in \mathbb{N}\}$ is optimal if it achieves the
lowest possible asymptotic local risk at any point $\theta_{0}\in \Theta$.
The latter is called the {\it asymptotic minmax risk}
$$
R_{\min\!\max}(\theta_{0})=  \limsup_{n\to\infty} \inf_{L_{n}}
R_{\max}(L_{n};\theta_{0}).
$$

Since the above formula may look rather ad-hoc to the reader who is not familiar with statistical methodology, we would like to explain its meaning in some detail.
First of all, note that we compare the input and output states globally rather than locally on each system. This means that even though as $n\to \infty$ we get more and more information about the parameter $\theta$, and we can estimate it with accuracy $O(n^{-1/2})$, in the same time the task of repreparing the state $\rho_{\theta}^{\otimes n}$ becomes more and more difficult! This can be easily understood by looking at fidelity in the case of pure states.
Let $\hat\theta_{n}$ be an estimator of
$\theta$ obtained by measuring $\psi_{\theta}^{\otimes n}$, so that
$\hat\theta_{n}-\theta =O(n^{-1/2})$, and suppose that we reprepare the state
$\psi_{\hat{\theta}_{n}}$. Then

$$
\left|\left\langle \psi_{\theta}^{\otimes n}|  \psi_{\hat{\theta}_{n}}^{\otimes n}
\right \rangle\right|^{2} =
(\cos \alpha_{n})^{2n} = 1-\alpha_{n}^{2}/2+ o(n^{-1})
$$
where $\alpha_{n}$ is an angle of order $n^{-1/2}$. Since
$$
\left(1- \frac{c}{n} +o(n^{-1})\right)^{2n} \longrightarrow \exp(-2c),
$$
we see that the input-output fidelity cannot converge to $1$. We will show that this is the case for arbitrary states, and also when we allow for other MAP schemes.

The second remark concerns the supremum over the small ball
$\|\rho_{\theta}-\rho_{\theta}\|_{1}\leq n^{-1/2+\epsilon}$ in definition
\eqref{eq.local.asymptotic.risk}. Why not consider the supremum over all $\theta$ as we did in the non-asymptotic case ? The reason is that the global supremum would be overly pessimistic and would be dominated by the region in the parameter space which is hardest for the benchmark problem. Restricting to a ball whose size is roughly that of the uncertainty in the parameter captures the local behaviour of MAP scheme at each point and is more informative than the global maximum. The ball should have size $n^{-1/2+\epsilon}$ because even if $\theta$ is unknown beforehand, it can be localised within such a region by measuring a small proportion $n^{1-\epsilon}\ll n$ of the systems, so that effectively {\it we know}
that we are in the local ball, and this should be reflected in the definition of the risk.
The localisation argument is standard in statistics and its application in quantum statistics is detailed in \cite{Guta&Janssens&Kahn}.

Finally, the relation between our figure of merit and the Bayes risk can be sketched as follows. If $R_{\pi}$ denotes the asymptotically optimal Bayes risk for the prior $\pi$,
$$
R_{\pi} := \limsup_{n\to\infty} \inf_{L_{n}} \int \pi(d\theta)
\| T_{n}(\rho_{\theta}^{\otimes n}) -\rho_{\theta}^{\otimes n}\|_{1}
$$
then under suitable conditions on $\pi$ and the model $\mathcal{Q}$ one obtains
$$
R_{\pi}= \int \pi(d\theta) R_{\min\!\max}(\theta).
$$
The intuitive explanation is that when $n\to\infty$ the features of the prior $\pi$
are washed out and the posterior distribution concentrates in a local neighbourhood of the true parameter, where the behaviour of the MAP procedures is governed by the local minmax risk.
A full proof of this relation is beyond the scope of this paper (and will be presented elsewhere) but the interested reader
may consult \cite{Belavkin&Guta} for the proof of the corresponding statement in the case of state estimation.

\section{Local asymptotic normality}
\label{sec.lan}

In this Section we will give a brief, self-contained introduction to the theory of LAN in as much detail as it is necessary for this paper and we refer to \cite{Guta&Janssens&Kahn} for proofs
and more analysis.

LAN is a fundamental concept in mathematical statistics introduced by the French statistician Le Cam \cite{LeCam}.  It roughly means  that a large i.i.d. sample $X_{1},\dots , X_{n}$ from an unknown distribution contains approximately the same amount of statistical information as a single sample from a Gaussian  distribution with unknown mean and known variance. More precisely if $X_{i}$ has distribution $\mathbb{P}_{\theta}$ depending `smoothly' on a finite dimensional parameter $\theta\in \Theta\subset\mathbb{R}^{k}$, then in a $n^{-1/2}$-size neighbourhood of any point $\theta_{0}$, the statistical model
$$
\mathcal{P}^{n}: =
\left\{
\mathbb{P}_{\theta_{0}+u/\sqrt{n}}^{n} :u\in \mathbb{R}^{k}
\right\}
$$
is well approximated by a simpler Gaussian shift model
$$
\mathcal{N}:= \left\{ N\left(u, I^{-1}(\theta_{0})\right) :u\in \mathbb{R}^{k}\right\}
$$
where $u\in\mathbb{R}^{k}$ is the local unknown parameter of the distribution, and $I^{-1}(\theta_{0})$ is the inverse Fisher information matrix at $\theta_{0}$.
Note that this approximation holds only locally, reflecting the intrinsic
uncertainty in the unknown parameter for the sample size $n$, which should not be seen as an additional assumption about $\theta$. Indeed, the parameter can always be localised  in such a region using a preliminary estimator, with vanishing probability of failure. The point of this approximation is to reduce a statistical problems (e.g. estimation) about the more complex model $\mathcal{P}^{n}$ to a simpler problem about the Gaussian model $\mathcal{N}$.

Local asymptotic normality also provides a convenient description of quantum statistical models involving i.i.d. quantum systems. Here the idea is: when the quantum `sample' is large, the model can be approximated by a simpler quantum Gaussian model. If this approximation holds in a sufficiently strong sense, then statistical problems about the qubits model can be reformulated in terms of the Gaussian one, without any loss of optimality.

In quantum statistics this technique has been used for optimal state estimation with
completely unknown finite dimensional quantum states
\cite{Guta&Kahn,Guta&Kahn2}, for optimal classification (learning)
of spin states \cite{Guta&Kotlowski}, for state transfer between matter and light
\cite{Guta&Janssens&Kahn}. In the physics literature,  LAN is used in an informal way to describe the dynamics of atomic gases in a simplified Gaussian approximation, in quantum memories and quantum metrology with spin coherent and squeezed states.

\subsection{LAN for qubit systems}

We are given $n$ independent identically prepared spin-$\frac12$ particles (qubits) in a state
$$
\rho_{\vec{r}} = \frac{1}{2}(\mathbf{1} + \vec{r}\vec{\sigma})
$$
where $\vec{r}$ is the Bloch vector of the state and
$\vec{\sigma} = (\sigma_{x}, \sigma_{y}, \sigma_{z})$ are the Pauli matrices in
$M(\mathbb{C}^{2})$.

Although a priori the state $\rho_{\vec{r}}$ is completely unknown, the following argument shows that that without loss of generality we can assume it to be
{\it localised} within a small ball of size $n^{-1/2+\epsilon}$ where $\epsilon>0$ is arbitrarily small. Indeed by measuring a small proportion $n^{1-\epsilon}\ll n$ of the systems we can devise  an initial rough estimator $\rho_{0}:=\rho_{\vec{r}_{0}}$ so that with high probability the state is in a ball of size $n^{-1/2+\epsilon}$ around  $\rho_{0}$ (see Lemma 2.1 in \cite{Guta&Kahn}). We label the states in this ball by the local parameter $\vec{u}$
$$
\rho_{\vec{u}/\sqrt{n}} =
\frac{1}{2} \left(
\mathbf{1} + (\vec{r}_{0} + \vec{u}/\sqrt{n})\vec{\sigma}\right)
$$
and define the local statistical model by
\begin{equation}\label{eq.q.n}
\mathcal{Q}_{n}:= \left\{\rho^{n}_{\vec{u}} : \| \vec{u}\|\leq n^{\epsilon} \right\} ,\qquad \rho^{n}_{\vec{u}} :=\rho^{\otimes n}_{\vec{u}/\sqrt{n}}.
\end{equation}

By choosing the reference frame with its $z$ axis along $\vec{r}_{0}$ we find that up to $O(n^{-1})$ terms the state $\rho_{\vec{u}/\sqrt{n}}$ is obtained by
perturbing the eigenvalues of $\rho_{0}$ and rotating it with a `small unitary'

$$
\rho_{\vec{u}/\sqrt{n}}
=
U_{\vec{u}/\sqrt{n}}
\left(
\begin{array}{cc}
\mu_{0}+\frac{u_{z}}{2\sqrt{n}} & 0\\
0 & 1-\mu_{0}-\frac{u_{z}}{2\sqrt{n}}
\end{array}
\right)
U_{\vec{u}/\sqrt{n}}^{\dagger},
$$
where
$$
U_{\vec{u}/\sqrt{n}}:=
\exp(i(- u_{y} \sigma_{1}+u_{x}\sigma_{2} )/2r_{0}\sqrt{n}),\qquad
r_{0}:=\|\vec{r}_{0}\|.
$$

\subsection{The big Bloch ball picture}

The asymptotic behaviour of the multiple spins state can be intuitively explained through the `big Bloch sphere' picture commonly used to describe spin
coherent \cite{Radcliffe} and spin squeezed states \cite{Kitagawa&Ueda}.
Let
$$
L_{a}:= \sum_{i=1}^{n}  \sigma^{(i)}_{a}  ,\qquad a=x,y,z
$$
be the collective spin components along the reference frame directions.
By the Central Limit Theorem, the distributions of $L_{a}$ with respect to
$\rho_{0}^{\otimes n}$ converge as
\begin{eqnarray*}
\frac{1}{\sqrt{n}}(L_{z} - n r_{0}) &\overset{\mathcal{D}}{\longrightarrow}&
N(0, 1-r_{0}^{2}),\\
\frac{1}{\sqrt{n}}L_{x,y} & \overset{\mathcal{D}}{\longrightarrow} & N(0, 1),
\end{eqnarray*}
so that the joint spins state can be pictured as a vector of length $nr_{0}$
whose tip has a Gaussian blob of size $\sqrt{n}$ representing the uncertainty in the collective variables (see Figure \ref{fig.big.ball}). Further more, by a law of large numbers argument we evaluate the commutators
\begin{eqnarray*}
&&\left[\frac{1}{\sqrt{n}} L_{x}, \frac{1}{\sqrt{n}} L_{y}\right]=
2i \frac{1}{n} L_{z} \approx
 2i r_{0} \mathbf{1},\\
&&\left[\frac{1}{\sqrt{n}} L_{x,y}, \frac{1}{\sqrt{n}} L_{z}\right]
\approx 0.
\end{eqnarray*}
Thus the rescaled observables $ L_{x}/\sqrt{2r_{0}n}$ and
$L_{y}/\sqrt{2r_{0}n}$ converge to the canonical coordinates $Q$  and $P$ of a quantum harmonic oscillator. Moreover, the variances correspond to that of a thermal equilibrium state
$$
\Phi:= (1-s) \sum_{k=0}^{\infty} s^{k} | k\rangle\langle k|,\qquad
s= \frac{1-r_{0}}{1+r_{0}},
$$
where $\{|k\rangle :k\geq 0\}$ represents the Fock basis.

\begin{figure}[t]
\includegraphics[width=7cm]{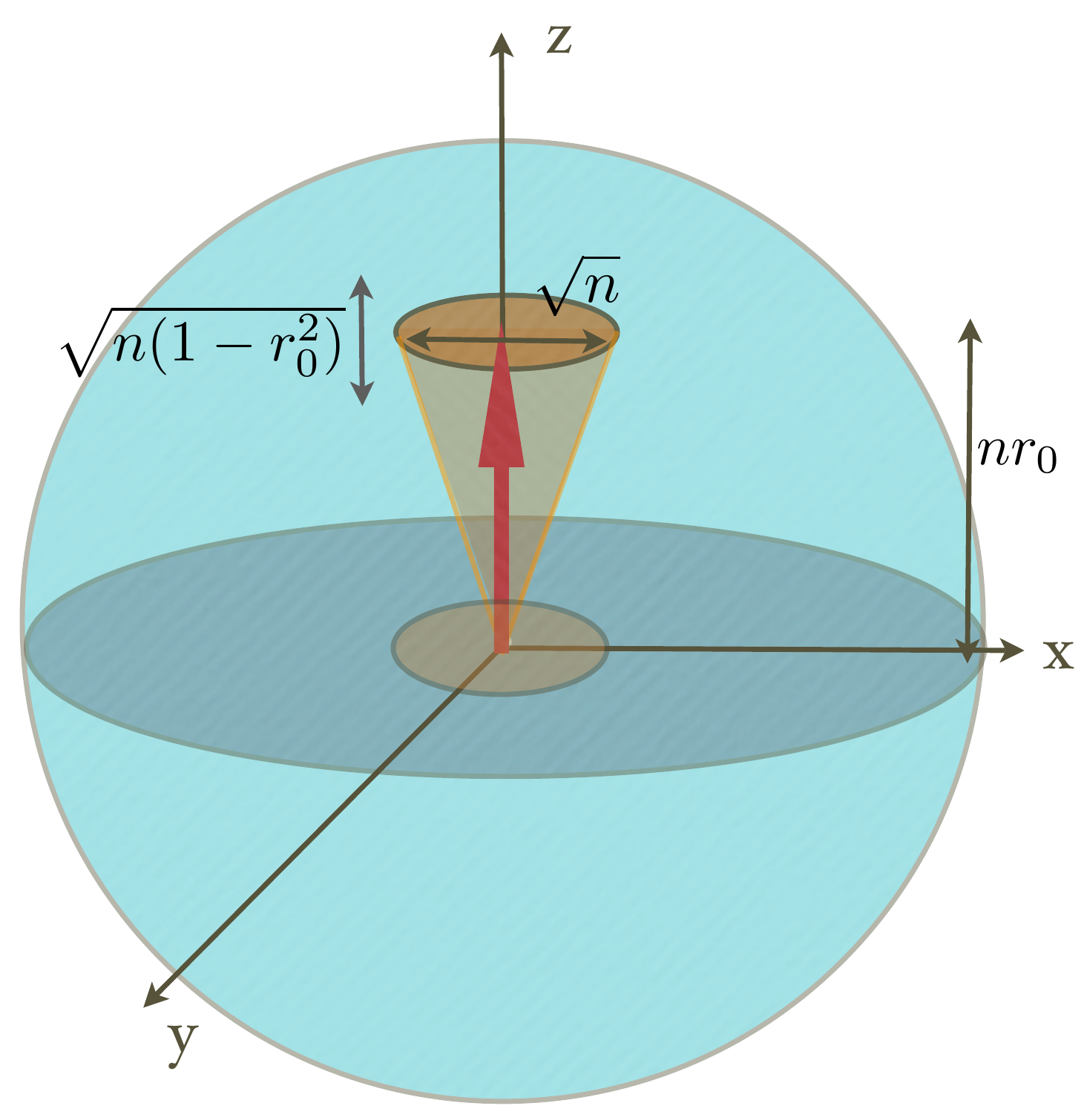}
\caption{(Color online) Big Bloch ball picture for $n$-qubit i.i.d. mixed states.}
\label{fig.big.ball}
\end{figure}

As for the third component, $(L_{z} - n r_{0})/\sqrt{n}$ converges to a
{\it classical} Gaussian variable $X\sim N:= N(0,1-r_{0}^{2})$ which is independent of the quantum state.

How does the limit change when we perturb the state of the spins ? By the same argument we find that the variables $Q,P,X$ pick up expectations which (in the first order in $n^{-1/2}$) are proportional to the local parameters $(u_{x},u_{y},u_{z})$ while the variances remain unchanged. More precisely the oscillator is in a
displaced thermal equilibrium state
$
\Phi_{\vec{u}} := D(\vec{u}) \Phi D(\vec{u})^{\dagger} ,
$
where $D(\vec{u})$ is the displacement operator
$$
D(\vec{u}):=\exp \left( i (- u_{y} Q+u_{x}P )/\sqrt{2r_{0}} \right),
$$
and the classical part has distribution $N_{\vec{u}}:=N(u_{z}, 1-r_{0}^{2})$.

We have thus identified the limit Gaussian model which is a tensor product of a classical distribution and a quantum state, which together can be seen as a state
(positive normal functional) on the von Neumann algebra
$\mathcal{B}(\ell^{2}(\mathbb{N})) \otimes L^{\infty}(\mathbb{R})$.

\begin{definition}\label{def.quantum.gaussian.shift}
The quantum Gaussian shift model $\mathcal{G}$ is defined by the family of quantum-classical states
\begin{equation}\label{eq.q.Gaussian.shift}
\mathcal{G}:= \{ \Phi_{\vec{u}}\otimes N_{\vec{u}} : \vec{u}\in\mathbb{R}^{3} \}
\end{equation}
on $\mathcal{B}(\ell^{2}(\mathbb{N})) \otimes L^{\infty}(\mathbb{R})$.
\end{definition}

In the next subsection we formulate a precise statement about the convergence to the Gaussian model which goes beyond the Central Limit type argument presented above.

\subsection{Strong convergence to Gaussian shift model}

The notion of strong convergence of classical statistical models was introduced by Le Cam and is based on defining a natural distance between statistical models with the same parameter space, so that models models at zero distance are statistically equivalent and models which are close, have similar behaviour for `regular' statistical decision problems. The existing results on quantum sufficiency \cite{Petz&Jencova} and quantum LAN \cite{Guta&Jencova,Guta&Kahn2}
indicate the existence of a theory of quantum decision and convergence of models.

In the classical set-up the distance between models is defined operationally, in
terms of randomisations which can be seen as the classical counterpart
of quantum channels.
\begin{definition}
A positive linear map
$$
T:L^{1}(\mathcal{X},\mathcal{A},\mathbb{P}) \to
L^{1}(\mathcal{Y},\mathcal{B},\mathbb{Q})
$$
is called a stochastic operator (or randomisation)
if $\|T(p)\|_{1}= \|p\|_{1}$ for every $p\in L^{1}_{+}(\mathcal{X})$.
\end{definition}
Since we work with models which may contain both classical and quantum `states' we will call a channel, a completely positive, normalised map
between preduals of von Neumann algebras. In finite dimensions, this reduces to
the familiar notion of a channel, this time with {\it block-diagonal} input and output density matrices.
\begin{definition}\label{def.quantum.LeCam.distance}
Let $\mathcal{P} := \{ \rho_{\theta}: \theta\in \Theta\}$ and
$\mathcal{Q}:= \{\sigma_{\theta}: \theta\in \Theta\} $ be two quantum statistical models over $\Theta$ with $\rho_{\theta}$ and $\sigma_{\theta}$ normal states of von Neumann algebras $\mathcal{A}$ and respectively $\mathcal{B}$.

The deficiencies
$\delta(\mathcal{P},\mathcal{Q}) $
and
$\delta(\mathcal{Q},\mathcal{P})$
are defined as
\begin{eqnarray*}
\delta(\mathcal{P},\mathcal{Q}) &:=&
\inf_{T} \sup_{\theta\in \Theta}\| T(\rho_{\theta}) -\sigma_{\theta} \|_{1}\\
\delta(\mathcal{Q},\mathcal{P}) &:=&
\inf_{S} \sup_{\theta\in \Theta}
\| S(\sigma_{\theta}) -\rho_{\theta}\|_{1}
\end{eqnarray*}
where the infimum is taken over all channels $T,S$ and $\|\cdot\|_{1}$ denotes
the $L_{1}$-norm on the preduals.

The Le Cam distance between $\mathcal{P}$ and $\mathcal{Q}$ is
$$
\Delta(\mathcal{P},\mathcal{Q}):=
{\rm max}(\delta(\mathcal{Q},\mathcal{P}) ,\, \delta(\mathcal{P} ,\mathcal{Q})).
$$
\end{definition}

With this definition we can formulate the strong convergence of the sequence
$\mathcal{Q}_{n}$ of i.i.d. qubit models to the Gaussian limit.

 \begin{theorem}\label{th.qlan.qubits}
 Let $\mathcal{Q}_{n}$ be the sequence of statistical models \eqref{eq.q.n} for $n$ i.i.d. local spin-$\frac12$ states.
and let $\mathcal{G}_{n}$ be the restriction of the Gaussian shift model
\eqref{eq.q.Gaussian.shift} to the range of parameters $\|\vec{u}\|\leq n^{\epsilon}$.
Then
$$
\lim_{n\to\infty}\Delta(\mathcal{Q}_{n}, \mathcal{G}_{n}) =0,
$$
i.e. there exist sequences of channels $T_{n}$ and $S_{n}$ such that

\begin{equation}\label{eq.channel.conv.}
\begin{split}
\lim_{n\to \infty}\,
\sup_{\| \vec{u}\|\leq n^{\epsilon}}
\| \Phi_{ \vec{u}}\otimes N_{\vec{u}}  -
T_{n} \left(  \rho_{ \vec{u}}^{n}\right)  \|_{1} =0, \\
\lim_{n\to \infty} \,
\sup_{\| \vec{u}\|\leq n^{\epsilon}} \| \rho_{\vec{u}}^{n} - S_{n} \left(  \Phi_{ \vec{u}}\otimes N_{\vec{u}} \right)  \|_{1} =0. \\
\end{split}
\end{equation}
\end{theorem}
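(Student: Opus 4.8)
The plan is to exploit the large symmetry group of the i.i.d. model. Since $\rho_{\vec u/\sqrt n}^{\otimes n}$ is invariant under permutations and covariant under the diagonal action of $SU(2)$, I would begin by decomposing the space $(\mathbb{C}^{2})^{\otimes n}=\bigoplus_{j}\mathcal{H}_{j}\otimes\mathcal{K}_{n,j}$ via Schur--Weyl duality, where $j$ runs over the allowed total-spin values, $\mathcal{H}_{j}$ carries the irreducible spin-$j$ representation of dimension $2j+1$, and $\mathcal{K}_{n,j}$ is the corresponding $S_{n}$ multiplicity space. In this basis the state is block-diagonal, $\rho_{\vec u/\sqrt n}^{\otimes n}=\bigoplus_{j}p_{n,j}(\vec u)\,\rho_{j}^{(\vec u)}\otimes\tau_{n,j}$, where Schur's lemma forces $\tau_{n,j}$ to be the maximally mixed state on $\mathcal{K}_{n,j}$ (it carries no information about $\vec u$ and will be discarded), $p_{n,j}$ is the probability of the block, and $\rho_{j}^{(\vec u)}$ a state on the irrep.

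The two-part structure of the Gaussian limit is mirrored by two features of this decomposition, matching the heuristic of the big Bloch ball picture. First, the distribution $j\mapsto p_{n,j}$ of the total spin concentrates, by a local central limit theorem, around $j\approx nr_{0}/2$ with fluctuations of order $\sqrt n$; since the oscillator number operator contributes only an $O(1)$ correction, the centred and rescaled deviation of $j$ reproduces at the $\sqrt n$ scale the fluctuation of $L_{z}$, and identifies its limit as the classical Gaussian $N_{\vec u}$ with mean $u_{z}$ and variance $1-r_{0}^{2}$. Second, within a typical block the dimension $2j+1$ diverges, and under the contraction of the angular-momentum algebra to the Heisenberg--Weyl algebra the state $\rho_{j}^{(\vec u)}$ converges to the displaced thermal state $\Phi_{\vec u}$. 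Concretely I would fix an isometric embedding $V_{n,j}\colon\mathcal{H}_{j}\hookrightarrow\ell^{2}(\mathbb{N})$ sending the $L_{z}$-eigenvectors, ordered down from the highest weight, to the Fock states $|0\rangle,\dots,|2j\rangle$, and define $T_{n}$ to measure $j$ by a nondemolition measurement of the total spin, report the rescaled fluctuation as the classical register, apply $V_{n,j}$ on the irrep, and trace out $\mathcal{K}_{n,j}$.

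For the reverse channel $S_{n}$ I would invert this construction: read the classical register to select a block $j$, truncate the Fock state to dimension $2j+1$ and pull it back through $V_{n,j}^{\dagger}$, then tensor with $\tau_{n,j}$ to re-embed into $(\mathbb{C}^{2})^{\otimes n}$. After the triangle inequality and the block decomposition, the trace-norm errors of both channels reduce to two estimates: the total-variation convergence of the weights $p_{n,j}(\vec u)$ to the Gaussian probabilities, and the block-wise convergence $\|V_{n,j}\rho_{j}^{(\vec u)}V_{n,j}^{\dagger}-\Phi_{\vec u}\|_{1}\to 0$ for typical $j$. The exponential decay of the thermal weights guarantees that the Fock-space truncation in $S_{n}$ costs only a negligible error.

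The hard part will be making these two estimates hold \emph{uniformly} over the growing ball $\|\vec u\|\le n^{\epsilon}$. The displacement of the target oscillator grows like $\|\vec u\|/\sqrt{2r_{0}}$, so relevant photon numbers reach order $n^{2\epsilon}$; although this is still $\ll 2j\sim nr_{0}$, the coherent-state asymptotics relating Wigner $d$-functions (equivalently $SU(2)$ coherent states) to oscillator coherent states must be controlled with explicit remainder bounds that remain small up to such photon numbers and uniformly in the displacement direction. Likewise the local central limit theorem for $p_{n,j}$ must be made uniform in $\vec u$, noting that $u_{z}$ shifts the typical block by $O(u_{z})$ in accordance with the mean of $N_{\vec u}$. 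Assembling these uniform bounds and summing the block contributions yields $\sup_{\|\vec u\|\le n^{\epsilon}}\|\Phi_{\vec u}\otimes N_{\vec u}-T_{n}(\rho_{\vec u}^{n})\|_{1}\to 0$ and its counterpart for $S_{n}$, hence $\Delta(\mathcal{Q}_{n},\mathcal{G}_{n})\to 0$.
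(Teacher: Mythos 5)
Your proposal follows essentially the same route as the paper's (sketched) proof: Schur--Weyl decomposition, block-diagonal form with the multiplicity space discarded, a central-limit statement for the total-spin distribution giving the classical Gaussian $N_{\vec u}$, the isometry sending $L_z$-eigenvectors (ordered from the highest weight) onto Fock states to produce $\Phi_{\vec u}$, and an inverse channel built by truncation and pull-back. You also correctly identify that the real analytic work is making both estimates uniform over $\|\vec u\|\le n^{\epsilon}$, which the paper itself delegates to the reference of Gu\c{t}\u{a}, Janssens and Kahn.

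There is, however, one concrete gap in the classical part. The law of the (rescaled, centred) block label $j$ is a \emph{discrete} probability distribution, while the target $N_{\vec u}$ is a continuous Gaussian on $L^{\infty}(\mathbb{R})$; the total-variation ($L_{1}$) distance between a discrete measure and an absolutely continuous one is identically $2$, so the statement ``total-variation convergence of the weights $p_{n,j}(\vec u)$ to the Gaussian probabilities'' cannot hold as written, and merely ``reporting the rescaled fluctuation as the classical register'' does not produce a state close to $N_{\vec u}$ in $\|\cdot\|_{1}$. The paper repairs this by randomising the discrete variable $G_{n}$ with a Gaussian Markov kernel of variance $1/(2\sqrt{n})$, i.e.\ $K_{n,j}(x)=(n^{1/4}/\sqrt{\pi})\exp(-\sqrt{n}(x-G_{n}(j))^{2})$, which smooths the atomic law into a density and upgrades convergence in distribution to strong $L_{1}$ convergence; the inverse channel $S_{n}$ correspondingly discretises the continuous register back onto the block labels. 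Your construction needs this smoothing (or an equivalent randomisation) inserted into $T_{n}$, and the matching discretisation into $S_{n}$, for the claimed trace-norm limits to make sense. The rest of your argument is consistent with the paper's.
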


Let us make a few comments on the significance of the above result. The first point is that LAN provides a stronger characterisation of
the `Gaussian approximation'  than the usual Central Limit Theorem arguments. Indeed the convergence in Theorem \ref{th.qlan.qubits} is strong (in $L_{1}$) rather than weak (in distribution), it is {\it uniform} over a range of local parameters rather than at a single point, and has an operational meaning based on quantum channels.

Secondly, one can exploit these features to devise asymptotically optimal measurement strategies for state estimation which can be implemented in practice by coupling with a bosonic bath and performing continuous time measurements in the bath \cite{Guta&Janssens&Kahn}.

Thirdly, the result is not restricted to state estimation but can be applied to a range of quantum statistical problems involving i.i.d. qubit states such as cloning, teleportation benchmarks, quantum learning, and can serve as a mathematical framework for analysing quantum state transfer protocols.

For completeness, we give a brief review of the main ideas involved in the proof
of Theorem \ref{th.qlan.qubits} and the description of channels $T_{n},S_{n}$ in the Appendix.

\section{Quantum benchmark for displaced thermal states with trace norm distance}
\label{sec.thermal}

In this Section we address the problem of finding the best MAP scheme for the Gaussian family of displaced thermal states with unknown mean and given variance
$$
\mathcal{T}:= \{\Phi_{z} :  z\in \mathbb{C} \}.
$$
To our knowledge this problem has only been solved in the case when the figure of merit is `overlap fidelity' \cite{Owari}. Here we show that the same procedure is optimal when the loss function is the trace norm distance.

Let $L:=(M,P)$ be a MAP procedure and define the
maximum risk as
$$
R_{\max}(L) = \sup_{z\in \mathbb{C}} \| P\circ M(\Phi_{z}) - \Phi_{z}\|_{1}.
$$

The main result  of this Section is the following.

\begin{theorem}\label{th.benchmark.Gaussian}
Let  $L^{*}:=(H,P)$ be given by heterodyne measurement followed by preparation
of a coherent state centred at the outcome of the measurement. Then $L^{*}$ is minmax i.e. for any $L$
$$
R_{\max}(L^{*})\leq  R_{\max}(L),
$$
and its risk $R^{*}(s)$ is given in Lemma \ref{lemma.risk.calculation}.
\end{theorem}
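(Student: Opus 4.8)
The plan is to exploit the covariance of the problem to cut the class of candidate schemes down to a one-parameter family, and then to settle optimality within that family by a stochastic-ordering comparison of Fock-space distributions.

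First I would reduce to covariant MAP schemes. The family $\mathcal{T}=\{\Phi_z\}$ and the trace-norm loss are both invariant under the phase-space translations $z\mapsto z+\beta$ implemented by $D(\beta)$, and also under the phase rotations that permute the $\Phi_z$ among themselves while fixing $\Phi_0=\Phi$. Given an arbitrary scheme $L=(M,P)$ I would average its channel $T=P\circ M$ over this symmetry group to obtain a covariant channel $\bar T$ with $\bar T(\Phi_z)=D(z)\bar T(\Phi)D(z)^\dagger$. Since the loss is a norm, hence convex, and the maximum over $z$ dominates any group average, this averaging cannot increase the maximum risk, so it suffices to optimise over covariant schemes. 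The only subtlety is that the translation group $\mathbb{C}\cong\mathbb{R}^2$ is non-compact, so the averaging must be realised through an invariant mean (a limit of averages over growing phase-space regions), exactly as in Holevo's covariant estimation theory; this is standard and I would cite it rather than reprove it.

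For a covariant scheme the risk collapses to a single number: by unitary invariance of the trace norm, $\|\bar T(\Phi_z)-\Phi_z\|_1=\|\omega_0-\Phi\|_1$ with $\omega_0:=\bar T(\Phi)$ independent of $z$. Invariance under phase rotations forces the fixed point $\omega_0$ to commute with the number operator, so $\omega_0=\sum_k p_k\,|k\rangle\langle k|$ and the risk becomes the $\ell^1$ expression $\sum_k |p_k-\phi_k|$, where $\phi_k=(1-s)s^k$ is the geometric Fock distribution of $\Phi$. The scheme $L^*$ produces, at $z=0$, the Fock-diagonal state $\omega_0^*=\tfrac1\pi\int d^2\alpha\,\langle\alpha|\Phi|\alpha\rangle\,|\alpha\rangle\langle\alpha|$, which is again a thermal state but with strictly larger mean photon number, i.e. a geometric distribution $p_k^*=(1-s')s'^k$ with $s'>s$.

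It then remains to show that $\{p_k^*\}$ minimises $\sum_k|p_k-\phi_k|$ over all distributions achievable by covariant schemes. Here I would use the elementary bound $\sum_k|\phi_k-p_k|\ge 2\max_m\big(F_\Phi(m)-F_p(m)\big)$, valid for any two distributions, where $F$ denotes the cumulative function. The key lemma, in the spirit of the optimal-cloning analysis of \cite{Guta&Matsumoto}, is a stochastic-ordering statement: every covariant output is stochastically at least as spread out as $\omega_0^*$, i.e. $F_{\omega_0}(m)\le F_{\omega_0^*}(m)\le F_\Phi(m)$ for all $m$. Granting this, and using that the two geometric distributions $\phi$ and $p^*$ cross exactly once so that the above bound is saturated for $\omega_0^*$, we obtain $\|\omega_0-\Phi\|_1\ge 2\max_m\big(F_\Phi(m)-F_{\omega_0^*}(m)\big)=\|\omega_0^*-\Phi\|_1$, which is precisely $R_{\max}(L^*)\le R_{\max}(L)$. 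The explicit value $R^*(s)$ then follows by evaluating the $\ell^1$ distance between $\phi$ and $p^*$, as recorded in Lemma \ref{lemma.risk.calculation}.

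The main obstacle is the stochastic-domination lemma. Writing a general covariant channel as a covariant POVM with seed $\xi$ followed by a covariant repreparation with seed $\sigma_0$, the output $\omega_0$ is a phase-space convolution of $\Phi$ with the noise contributed by $\xi$ and $\sigma_0$. I expect the crux to be showing, from this measure-and-prepare (entanglement-breaking) structure together with the uncertainty relation constrained by $\xi$ and $\sigma_0$, that the added noise is at least that of the vacuum seeds used by $L^*$, and that this noise excess orders the cumulative Fock distributions as claimed. The difficulty is that covariant schemes may employ highly non-Gaussian seeds, so $\omega_0$ need not be thermal and a mere comparison of variances does not suffice; one genuinely needs the distributional statement $F_{\omega_0}\le F_{\omega_0^*}$, which is exactly what upgrades the physical noise bound into monotonicity of the trace-norm risk.
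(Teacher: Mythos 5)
Your overall architecture coincides with the paper's: reduce by covariance to displacement- and phase-covariant entanglement-breaking channels, so that the risk collapses to an $\ell^{1}$ distance between the geometric Fock distribution $q_{k}=(1-s)s^{k}$ of $\Phi$ and the (necessarily number-diagonal) Fock distribution of the fixed-point output; then derive optimality of $L^{*}$ from a stochastic-ordering statement. Your closing step --- bounding $\|p-q\|_{1}$ below by $2\max_{m}\bigl(F_{q}(m)-F_{p}(m)\bigr)$, invoking monotonicity of the cumulative functions under the ordering, and noting that the bound is saturated for the two geometric distributions because they cross exactly once --- is a slightly cleaner packaging of the paper's final lemma and is correct as far as it goes.

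The genuine gap is the stochastic-domination lemma itself, which you explicitly leave as ``the main obstacle'': that every covariant MAP output $\omega_{0}$ satisfies $F_{\omega_{0}}(m)\le F_{\omega_{0}^{*}}(m)$ for all $m$. This is the technical heart of the theorem, and it cannot be waved through for exactly the reason you identify --- non-Gaussian seeds make a comparison of variances insufficient. What makes the lemma provable is a structural fact you do not pin down: a displacement-covariant entanglement-breaking channel acts on Weyl operators as $T^{\dagger}(W_{\xi})=f(\sqrt{2}\,\xi)W_{\xi}$ with $f$ the characteristic function of a state $\tau$ having positive Wigner function, so the output is precisely the reduced state of the amplified mode $c=a^{\dagger}+\sqrt{2}\,b$ with $a$ in $\Phi$ and $b$ in $\tau$; the factor $\sqrt{2}$ encodes the two units of vacuum noise forced by the measure-and-prepare structure. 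The paper (Lemma \ref{lemma.stoch.ordering}) then purifies the thermal input by realising $a$ as one arm of a parametric amplifier acting on vacua, absorbs $\tau$ and one vacuum arm into a single mode through a beamsplitter, reduces by convexity to Fock seeds $|k\rangle\langle k|$, and verifies the cumulative inequality by an explicit binomial computation. Without this (or an equivalent) argument your proposal establishes only the reduction of the problem, not the benchmark itself.
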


%

Before proceeding with the proof let us recall some basic definitions.
The quantum particle or `one mode' continuous variables system is characterised by the Weyl (or CCR) algebra generated by the operators $W_{\xi}$ with
$\xi\in \mathbb{C}$ satisfying the commutation relations
$$
W_{\xi} W_{\zeta}  = W_{\zeta}W_{\xi} \exp( -i {\rm Im}\langle \xi, \zeta\rangle ).
$$
The algebra is represented on the Hilbert space $L^{2}(\mathbb{R})$ as
$ W_{\xi}:=\exp(\xi a^{\dagger}-\bar{\xi} a)$ with $a,a^{\dagger}$ the creation and annihilation
operators. The latter are defined by their action on the Fock basis
$$
 |k\rangle := H_{k}(x)e^{-x^{2}/2}/\sqrt{k! 2^{k} \sqrt{\pi}} \quad k=0,1,2...
$$
such that $a^{\dagger}|k\rangle= \sqrt{k+1}|k+1\rangle$ and
$a |k\rangle= \sqrt{k}|k-1\rangle$.

A thermal state is a mixed state defined as
$$
\Phi := \sum_{k=0}^{\infty} (1-s) s^{k} |k\rangle\langle k|,
$$
where $0<s<1$ is a parameter related to the temperature by $s=e^{-\beta}$, which will be considered fixed and known. The displaced thermal states are
$\Phi_{z}:= W_{z} \Phi W_{z}^{\dagger}$. The heterodyne (or coherent) measurement is defined by its POVM
$$
H(d\zeta):= |\zeta\rangle \langle \zeta | d\zeta/2\pi
$$
where $|\zeta\rangle:=W_{\zeta}|0\rangle$ are the coherent states. For any state $\rho$ the probability density of the heterodyne outcomes is  the $Q$-function \cite{Leonhardt} $P_{\rho}(\zeta)= \langle \zeta |\rho |\zeta \rangle/ 2\pi $.
Now if we heterodyne the state $\Phi_{z}$ we obtain
$\zeta\sim N(z, \mathbf{1} (1-s)^{-1})$ and by preparing the coherent state $|\zeta\rangle$ we get the average output state
$$
T(\Phi_{z}):= P\circ H(\Phi_{z}) = \tilde{\Phi}_{z},
$$
where $\tilde{\Phi} $ is the thermal state with $\tilde{s}=(2-s)^{-1}$.
\begin{lemma}\label{lemma.risk.calculation}
The risk of the measure and prepare strategy $L^{*}:= (H,P)$ is
\begin{eqnarray*}
R_{\max}(L^{*})&:=& \sup_{z\in \mathbb{C}} \| P\circ H (\Phi_{z}) - \Phi_{z}\|_{1}\\
&=& R^{*}(s):=2 (2-s)^{-m_{0}-1}-2s^{m_{0}+1},
\end{eqnarray*}
where $m_{0}$ is the integer part of $-\log(2-s)/\log s(2-s) $.

\end{lemma}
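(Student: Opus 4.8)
The plan is to reduce the supremum over $z$ to a single trace-norm computation between two Fock-diagonal thermal states, and then to evaluate that norm through a sign analysis of its Fock-diagonal entries. Because the heterodyne-and-reprepare channel is covariant under displacements and the trace norm is unitarily invariant, I would first note that $\|P\circ H(\Phi_z)-\Phi_z\|_1 = \|W_z(\tilde\Phi-\Phi)W_z^\dagger\|_1 = \|\tilde\Phi-\Phi\|_1$ is independent of $z$. Hence the maximum risk equals $\|\tilde\Phi-\Phi\|_1$ and the supremum is attained at every $z$, which disposes of the ``$\sup_z$'' entirely.

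Next, since both $\Phi$ and $\tilde\Phi$ are diagonal in the Fock basis (with $\tilde s=(2-s)^{-1}$), the difference $\tilde\Phi-\Phi$ is self-adjoint and Fock-diagonal, so its trace norm is the sum of absolute values of its eigenvalues,
\[
\|\tilde\Phi-\Phi\|_1 = \sum_{k=0}^\infty \left|(1-\tilde s)\tilde s^{k}-(1-s)s^{k}\right|.
\]
The crux is the sign of $d_k:=(1-\tilde s)\tilde s^{k}-(1-s)s^{k}$. Using $1-\tilde s=(1-s)/(2-s)$, the inequality $d_k\ge 0$ becomes $(2-s)^{-1}\ge (s(2-s))^{k}$; since $s(2-s)=1-(1-s)^2<1$ the right-hand side is strictly decreasing in $k$, so $d_k$ is negative for small $k$ and positive for large $k$, changing sign exactly at $k^{*}:=-\log(2-s)/\log(s(2-s))$. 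Therefore $d_k<0$ for $k\le m_0$ and $d_k>0$ for $k\ge m_0+1$, with $m_0=\lfloor k^{*}\rfloor$ precisely the integer part in the statement; the non-generic case $k^{*}\in\mathbb{Z}$ forces $d_{m_0}=0$ and is harmless.

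Finally, I would use that $\{(1-s)s^k\}_k$ and $\{(1-\tilde s)\tilde s^k\}_k$ are both probability distributions, so $\sum_{k\ge0}d_k=0$. This collapses the norm to twice the negative part, $\|\tilde\Phi-\Phi\|_1=-2\sum_{k=0}^{m_0}d_k$, and the remaining finite geometric sums evaluate via $\sum_{k=0}^{m}(1-s)s^k=1-s^{m+1}$ (and likewise for $\tilde s$) to give $\|\tilde\Phi-\Phi\|_1=2(\tilde s^{m_0+1}-s^{m_0+1})$. Substituting $\tilde s=(2-s)^{-1}$ reproduces $R^{*}(s)=2(2-s)^{-m_0-1}-2s^{m_0+1}$ exactly. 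The only genuinely delicate point is identifying the crossover index and confirming that it equals $m_0+1$; the rest is unitary invariance together with summation of geometric series.
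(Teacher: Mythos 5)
Your proposal is correct and follows essentially the same route as the paper's proof: reduce to $\|\tilde\Phi-\Phi\|_1$ by displacement covariance, analyse the single sign change of the difference of the two geometric distributions at $m_0=\lfloor -\log(2-s)/\log(s(2-s))\rfloor$, and sum the finite geometric series. Your explicit derivation of the crossover inequality $(2-s)^{-1}\ge (s(2-s))^k$ fills in a step the paper only asserts, but the argument is otherwise identical.
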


\begin{proof}
By covariance we have $R_{\max}(L^{*})= \|\Phi -\tilde\Phi\|_{1}$. Both states are diagonal and we denote their elements $q_{i}:= (1-s)s^{i}$ and
$p_{i}:=(1-\tilde{s})\tilde{s}^{i}$ so that
$$
\|\Phi -\tilde\Phi\|_{1}=\sum_{i=0}^{\infty}| q_{i}-p_{i}|.
$$
For such geometric distributions there exists an integer
$m_{0}$ such that $p_{l} \leq q_{l}$ for $m\leq m_{0}$ and $p_{l}> q_{l}$ for
$m>m_{0}$, more precisely
$$
m_{0}
=\lfloor-\log(2-s)/\log s(2-s)\rfloor.
$$
In conclusion
$$
\| p- q\|_1 = 2\sum_{i=0}^{m_{0}}
q_{i}-p_{i}= 2 \tilde{s}^{m_{0}+1}-2s^{m_{0}+1}.
$$

\end{proof}

Figure \ref{fig.riskheterodyne} shows the decay of the risk $R(L^{*})$ as a function of the parameter $s=e^{-\beta}$ of the input state $\Phi$.
\begin{figure}[t]
\begin{center}
\includegraphics[width=8cm]{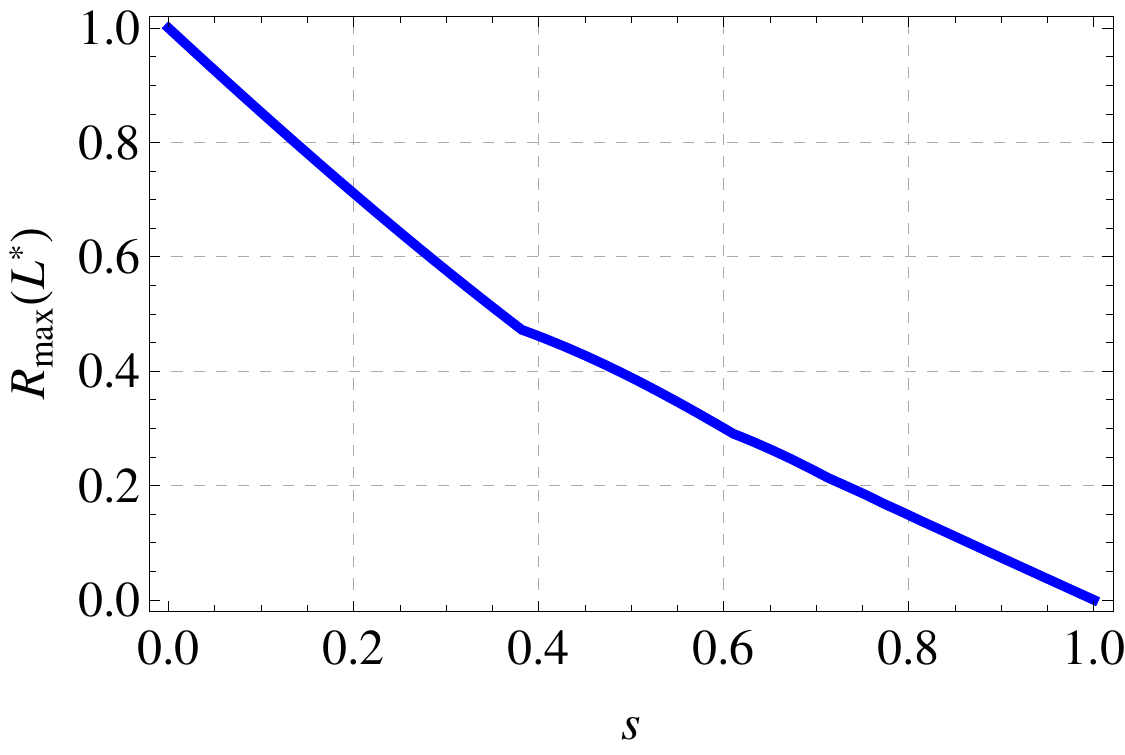}
\caption{(Color online) Risk of the optimal measure and prepare scheme as function of $s=e^{-\beta}$. All the quantities plotted are dimensionless.}
\label{fig.riskheterodyne}
\end{center}
\end{figure}

\begin{proof}
We start the proof of Theorem \ref{th.benchmark.Gaussian} by following a standard argument \cite{Owari} which shows that we can first restrict to phase space (or displacement) covariant, entanglement breaking channels, and then that it is enough to show that $L^{*}$ is optimal in a larger class of `time-reversible' channels which can be easily characterised by their action on the Weyl operators $W_{\xi}:= \exp(\xi a^{\dagger}-\bar{\xi} a)$:
\begin{equation}\label{eq.covariant.channels}
T^{\dagger}(W_{\xi})= f(\xi \sqrt{2}) W_{\xi},
\end{equation}
where $f$ is a quantum characteristic function $f(\xi)= {\rm Tr}(\tau W_{\xi})$ for some state $\tau$ with positive Wigner function, i.e. $f$ is also a classical characteristic function.

For such channels $T=P\circ M$, the risk is independent of $z$ and is equal to
$$
R_{\max}(L)= \| T(\Phi)-\Phi\|_{1}.
$$
Now, since $\Phi$ is invariant under phase rotations,
$\Phi= \exp(i\theta N)\Phi \exp(-i\theta N)$, we can apply the covariance argument again \cite{Guta&Matsumoto} to conclude that we can restrict to channels which are covariant under phase rotation, which amounts to taking $\tau$ to be diagonal in
the Fock basis. Since for diagonal states $f(\xi)=f(| \xi |)$ we obtain the following Schr\"{o}dinger version of \eqref{eq.covariant.channels}
\begin{eqnarray*}
{\rm Tr}(T(\Phi) W_{\xi})&=& {\rm Tr}(\tau W_{\sqrt{2}\xi})
{\rm Tr}(\Phi W_{\xi}) \\ &=&  {\rm Tr}(\tau W_{\sqrt{2}\xi})
{\rm Tr}(\Phi W_{-\bar{\xi}})\\
&=&
{\rm Tr}\left(\tau\otimes \Phi  \exp ( \xi c -\bar{\xi}c^{\dagger}  )\right)
\end{eqnarray*}
where
\begin{equation}\label{eq.mode.c}
c:=\sqrt{2}b+a^{\dagger}
\end{equation}
is the annihilation operator of a new mode. In other words, the output state is the same as the state of the `amplified' mode $c$ when $b$ and $a$ are prepared in states $\tau$ and respectively $\Phi$.
From this point we follow closely the arguments used in
\cite{Guta&Matsumoto} which were originally devised for finding the optimal amplification channel for displaced thermal states. The candidate channels are labelled by diagonal matrices $\tau$ and we denote by $p^{\tau}$ the probability distribution consisting of the elements of the output state
$p^{\tau}_{i}= T_{\tau}(\Phi)_{i,i}$. We further denote
$q_{i}:= \Phi_{ii}=(1-s)s^{i}$ the geometric distribution of the input thermal state and by $p^{\omega}$ the distribution of the output state corresponding to
$\omega= |0\rangle\langle 0|$. It is easy to verify that $T_{\omega}$ is the channel associated to $L^{*}:=(H,P)$, hence we would like to show that for any $\tau$
$$
\|p^{\tau}- q\|_{1} \geq \|p^{\omega} - q\|_{1}.
$$
The proof is split into two parts and relies on the notion of stochastic ordering as the
key ingredient.
\begin{definition}
Let $p=\{p_{l} : l\in \mathbb{N}\}$ and $q= \{q_{l} : l\in \mathbb{N}\}$ be two probability distributions over
$\mathbb{N}$.
We say that $p$ is stochastically smaller than $q$ ($p\preceq q$) if
$$
\sum_{l=0}^{m} p_{l} \geq \sum_{l=0}^{m} q_{l}, \quad \forall m\geq 0.
$$
\end{definition}

\begin{lemma}\label{lemma.stoch.ordering}
For any state $\tau$ the following stochastic ordering holds:
$$
p^{\omega}\preceq p^{\tau}.
$$

\end{lemma}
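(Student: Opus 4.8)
The plan is to reduce the statement to Fock inputs, compute the output photon-number distribution in closed form, and then control the whole family of cumulative distribution functions through a single generating-function identity. First I would exploit the linearity of the map $\tau\mapsto T_\tau(\Phi)$. Writing a general phase-covariant $\tau=\sum_k \tau_k |k\rangle\langle k|$ gives $p^\tau=\sum_k \tau_k p^{(k)}$, where $p^{(k)}$ denotes the output distribution for the Fock input $\tau=|k\rangle\langle k|$, and $p^{(0)}=p^\omega$. Stochastic ordering is preserved under convex combinations that share a common smallest element: if $p^{(0)}\preceq p^{(k)}$ for every $k$, then for all $m$ one has $\sum_{l\le m} p^\tau_l=\sum_k \tau_k\sum_{l\le m} p^{(k)}_l\le \sum_{l\le m} p^{(0)}_l=\sum_{l\le m}p^\omega_l$, which is exactly $p^\omega\preceq p^\tau$. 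It therefore suffices to prove $p^{(0)}\preceq p^{(k)}$ for each fixed $k$.

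Next I would put $p^{(k)}$ in explicit form. The Fock-state characteristic function is ${\rm Tr}(|k\rangle\langle k| W_{\sqrt2\xi})=e^{-|\xi|^2}L_k(2|\xi|^2)$ (with $L_k$ the Laguerre polynomials), so combining the covariance relation ${\rm Tr}(T_\tau(\Phi)W_\xi)={\rm Tr}(\tau W_{\sqrt2\xi}){\rm Tr}(\Phi W_{-\bar\xi})$ with the thermal characteristic function gives ${\rm Tr}(\sigma^{(k)}W_\xi)=L_k(2|\xi|^2)\exp(-\tfrac12\mu|\xi|^2)$, where $\mu=(3-s)/(1-s)$. Inverting the characteristic function with $\langle n|W_{-\xi}|n\rangle=e^{-|\xi|^2/2}L_n(|\xi|^2)$ and integrating out the phase ($r=|\xi|^2$) yields the compact expression $p^{(k)}_n=\int_0^\infty L_k(2r)L_n(r)e^{-\gamma r}\,dr$ with $\gamma=(2-s)/(1-s)$; the case $k=0$ reproduces $p^\omega_n=(1-\tilde s)\tilde s^n$, a useful consistency check.

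The heart of the argument is to show that the cumulative function $F^{(k)}(m):=\sum_{n=0}^m p^{(k)}_n$ is nonincreasing in $k$ for every cutoff $m$, which is precisely $p^{(0)}\preceq p^{(k)}$. Using the telescoping identity $\sum_{n=0}^m L_n=L_m^{(1)}$ together with $\int_0^\infty L_m^{(1)}(r)e^{-\beta r}\,dr=1-((\beta-1)/\beta)^{m+1}$, I would sum over $k$ against the Laguerre generating function $\sum_k t^k L_k(2r)=(1-t)^{-1}\exp(-2rt/(1-t))$. Everything collapses to the single relation $(1-t)\sum_k F^{(k)}(m)\,t^k=1-g(t)^{m+1}$, where $g(t)=(1+(1-2s)t)/((2-s)-st)$. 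Reading off coefficients of $t^k$, the monotonicity of $F^{(k)}(m)$ in $k$ is then equivalent to $g(t)^{m+1}$ having nonnegative Taylor coefficients, which in turn follows as soon as $g$ itself does.

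The main obstacle is exactly this final positivity step: a naive term-by-term comparison of $p^{(k)}$ against $p^{(k+1)}$ fails because the Laguerre polynomials oscillate and change sign, so one genuinely needs the cumulative/generating-function reformulation to expose a sign-definite object. Once reframed, the crux is the short explicit check that the geometric expansion of $g(t)=(2-s)^{-1}(1+(1-2s)t)\sum_{j\ge0}(s/(2-s))^j t^j$ has nonnegative coefficients, which reduces to the elementary inequality $1-2s+s/(2-s)=2(1-s)^2/(2-s)\ge 0$; the remaining monotonicity and the passage back to general $\tau$ are then immediate from the first paragraph. A cleaner probabilistic route worth attempting would be a monotone coupling built on the phase-insensitive-amplifier interpretation of the mode $c=\sqrt2\,b+a^\dagger$, but making such a coupling precise with a thermal (rather than vacuum) idler appears delicate, so I would fall back on the generating-function computation as the reliable path.
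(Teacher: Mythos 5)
Your proof is correct, but it takes a genuinely different route from the paper's. The paper first purifies the thermal input $\Phi$ through a two-mode squeezer, so that the amplified mode becomes $c=\sinh(\tilde t)\,a_1^{\dagger}+\cosh(\tilde t)\,\tilde b$ with a vacuum idler $a_1$ and a beamsplitter-transformed mode $\tilde b$; convexity is then applied to the diagonal state $\tilde\tau$ of $\tilde b$, the output for $\tilde\tau=|k\rangle\langle k|$ is computed explicitly as a negative-binomial-type distribution $d^{k}_{l}$, and the partial-sum comparison reduces to the elementary inequality $\binom{k+p+1}{r}\geq\binom{k}{r}$. You instead stay with the original mode $b$, reduce to Fock inputs by the same convexity observation, and work entirely in characteristic-function space: your formula $p^{(k)}_{n}=\int_0^\infty L_k(2r)L_n(r)e^{-\gamma r}\,dr$ with $\gamma=(2-s)/(1-s)$ is right (the $k=0$ case correctly reproduces $(1-\tilde s)\tilde s^{\,n}$ with $\tilde s=(2-s)^{-1}$, confirming the conventions), the identity $(1-t)\sum_k F^{(k)}(m)t^k=1-g(t)^{m+1}$ with $g(t)=\bigl(1+(1-2s)t\bigr)/\bigl((2-s)-st\bigr)$ checks out, and reading off coefficients correctly converts monotonicity of $F^{(k)}(m)$ in $k$ into nonnegativity of the Taylor coefficients of $g^{m+1}$, which your computation $1-2s+s/(2-s)=2(1-s)^2/(2-s)\geq 0$ establishes. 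What the paper's route buys is a physical picture (beamsplitter plus amplifier with vacuum idler) and a purely combinatorial final step; what yours buys is a single algebraic identity that handles all $k$ and all cutoffs $m$ simultaneously and actually proves the stronger statement that $F^{(k)}(m)$ is monotone in $k$ — note also that $g(1)=1$, so $g$ is itself a probability generating function, which is a pleasant structural byproduct. The only points you should make explicit in a written version are the interchange of the sum over $k$ with the integral (justified by the bound $|L_k(x)e^{-x/2}|\leq 1$, which dominates the integrand by $(1-t)^{-1}e^{-(\gamma-1)r}$ with $\gamma>1$) and the restriction to diagonal $\tau$, which is legitimate here because the phase-covariance reduction preceding the lemma already puts $\tau$ in that form.
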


\begin{proof}
The first step is to `purify' $\Phi$ by writing the mode $a$ as one of the outputs of a degenerate parametric amplifier \cite{Walls&Milburn}
$$
a = \cosh(t) a_{1} + \sinh(t) a_{2}^{\dagger}
$$
with $a_{1,2}$ in the vacuum. If $\tanh (t)^{2} =s$ then the state of $a$ is $\Phi$. Plugging into \eqref{eq.mode.c} we get
$$
c =  \sinh(\tilde{t}) a_{1}^{\dagger}+ \cosh(\tilde{t}) ( T a_{2} + R b )
$$
where $ \sinh(\tilde{t})=\cosh(t), T= \sinh(t) /\cosh(\tilde{t})$ and
$R= \sqrt{2}/\cosh(\tilde{t})$ with $T^{2}+R^{2}=1$. Physically this means that the modes $a_{2}$ and $b$ are mixed with a beamsplitter and then a different degenerate parametric amplifier is applied together with mode $a_{1}$ which is in the vacuum. Note that $b$ is in the vacuum state if and only if $\tilde{b}:=T a_{2} + R b$ is also in the vacuum. For general diagonal states $\tau$ the mode $\tilde{b}$ is in the state $\tilde{\tau}$ given by the binomial formula \cite{Leonhardt}
$$
\tilde{\tau}=\sum_{k=0}^{\infty} \tau_{k} \sum_{p=0}^{k} \binom{k}{p} T^{2(p-k)} R^{2k}  |p\rangle \langle p|
= \sum_{p=0}^{\infty} \tilde{\tau}_{p}  |p\rangle \langle p| .
$$

The result of the purification argument is that we have reduced the problem of proving stochastic ordering for states of $c = a^{\dagger} + \sqrt{2}b$
when $a$ is in a {\it thermal} state, to the analogue problem for
$c= \sinh(\tilde{t}) \tilde{a}^{\dagger} +  \cosh(\tilde{t})\tilde{b}$ with $\tilde{a}:=a_{1}$ in the vacuum. Since stochastic ordering is preserved under convex combinations, it suffices to prove the statement when $\tilde\tau=|k\rangle\langle k|$ for any $k\neq 0$.

The following formula \cite{Walls&Milburn} gives a computable expression of the output two-modes vector state of the amplifier
\begin{eqnarray*}
\psi &=&
e^{\Gamma \tilde{a}^{\dagger} \tilde{b}^{\dagger}} e^{-g (\tilde{a}^{\dagger}\tilde{a}+ \tilde{b}^{\dagger}\tilde{b}+{\bf 1})}
e^{-\Gamma \tilde{a}\tilde{b}} |0,k\rangle
\\ &=&
e^{-g(k+1)} \sum_{l=0}^{\infty} \Gamma^{l} \binom{l+k}{k}^{1/2} |l,l+k\rangle,
\end{eqnarray*}
where $\Gamma = \tanh(\tilde{t})$ and $e^{g}=\cosh(\tilde{t})$. By tracing over the
 mode $\tilde{a}$ we obtain the desired state of $c$
 $$
\sum_{l=0}^{\infty} d^{k}_{l} |l \rangle \langle l|= e^{-2g(k+1)}
\sum_{l=k}^{\infty} \Gamma^{2(l-k)} \binom{l}{l-k} |l\rangle
 \langle l|.  $$
 The relation $p^{\omega}\preceq p^{\tau}$ reduces to showing that
 $$
 \sum_{l=0}^{m}  d^{0}_{l} \geq \sum_{l=0}^{m}  d^{k}_{l},
 $$
 for all $m$. If $m<k$ the right side is equal to zero and the inequality is trivial. With the notation $\gamma=\Gamma^{2}$ we get
\begin{eqnarray*}
\sum_{l=0}^{p+k} d^{k}_{l}&=&
(1-\gamma)^{k+1} \sum_{l=0}^{p}\gamma^{l} \binom{l+k}{k}\\
&=&\frac{(1-\gamma)^{k+1}}{k!} \left(\frac{1-\gamma^{k+p+1}}{1-\gamma} \right)^{(k)}\\
&=&1- \gamma^{p+1} \sum_{r=0}^{k} (1-\gamma)^{r} \gamma^{k-r} \binom{k+p+1}{r}\\
&\leq&
1-\gamma^{p+1} \sum_{r=0}^{k} (1-\gamma)^{r} \gamma^{k-r} \binom{k}{r}\\
&=&1-\gamma^{p+1} = \sum_{l=0}^{p} d^{0}_{l} \leq \sum_{l=0}^{p+k} d^{0}_{l} .
\end{eqnarray*}

\end{proof}
Stochastic ordering can be transformed into the desired optimality result by a standard argument \cite{Guta&Matsumoto} which has some interest in its own and is summarised in the following lemma. The key property needed here is that
$p^{\omega}_{l} \leq q_{l}$ if and only if $l\leq m_{0}$ (see Lemma \ref{lemma.risk.calculation}).

\begin{lemma}
The following inequality holds for any $\tau$
$$
\|p^{\tau}- q\|_{1} \geq \|p^{\omega} - q\|_{1}.
$$
\end{lemma}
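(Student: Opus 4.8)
The plan is to deduce the desired $L_{1}$ inequality from the two facts already in hand: the stochastic ordering $p^{\omega}\preceq p^{\tau}$ of Lemma \ref{lemma.stoch.ordering}, and the single-crossing property $p^{\omega}_{l}\leq q_{l}\Leftrightarrow l\leq m_{0}$ recorded just above. The bridge between them is the variational (total-variation) characterisation of the $L_{1}$ distance: for any two probability distributions $p,q$ on $\mathbb{N}$,
\[
\tfrac{1}{2}\|p-q\|_{1}=\sup_{A\subseteq\mathbb{N}}\Big(\sum_{l\in A}q_{l}-\sum_{l\in A}p_{l}\Big),
\]
the supremum being attained at the set $\{l:q_{l}>p_{l}\}$.

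First I would evaluate the right-hand side for $p^{\omega}$. By the single-crossing property the maximising set is exactly $A_{0}:=\{0,1,\dots,m_{0}\}$, which reproduces the computation of Lemma \ref{lemma.risk.calculation} in the form
\[
\tfrac{1}{2}\|p^{\omega}-q\|_{1}=\sum_{l=0}^{m_{0}}q_{l}-\sum_{l=0}^{m_{0}}p^{\omega}_{l}.
\]
Next, for an arbitrary diagonal $\tau$ I would feed the \emph{same} set $A_{0}$ into the variational formula as a (generally sub-optimal) test set, obtaining the one-sided bound
\[
\tfrac{1}{2}\|p^{\tau}-q\|_{1}\geq \sum_{l=0}^{m_{0}}q_{l}-\sum_{l=0}^{m_{0}}p^{\tau}_{l}.
\]
Finally I would invoke the stochastic ordering, which states precisely that $\sum_{l=0}^{m}p^{\tau}_{l}\leq\sum_{l=0}^{m}p^{\omega}_{l}$ for every $m$, and in particular at $m=m_{0}$. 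Chaining the three displays gives
\[
\tfrac{1}{2}\|p^{\tau}-q\|_{1}\geq \sum_{l=0}^{m_{0}}q_{l}-\sum_{l=0}^{m_{0}}p^{\tau}_{l}\geq \sum_{l=0}^{m_{0}}q_{l}-\sum_{l=0}^{m_{0}}p^{\omega}_{l}=\tfrac{1}{2}\|p^{\omega}-q\|_{1},
\]
which is the claim after multiplying by two.

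The conceptual core — and the only place where the specific geometry enters — is the observation that the crossing set $A_{0}$ optimal for the pair $(p^{\omega},q)$ can be reused as a lower-bounding test set for every other pair $(p^{\tau},q)$, while the stochastic ordering guarantees that passing from $p^{\omega}$ to any $p^{\tau}$ only moves probability mass to higher indices, i.e. out of $A_{0}$, which can only enlarge the deficit $\sum_{l\in A_{0}}(q_{l}-p_{l})$. I expect the main subtlety to lie not in the (short) inequality chain but in confirming that the single-crossing claim is stated with the correct strict/non-strict inequalities, so that $A_{0}$ is genuinely a maximiser in the variational formula; since the boundary terms with $q_{l}=p^{\omega}_{l}$ contribute zero, this is already guaranteed by the computation in Lemma \ref{lemma.risk.calculation} and needs no further argument.
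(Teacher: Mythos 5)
Your proof is correct and follows essentially the same route as the paper's: both rest on the total-variation variational formula with initial segments as test sets, the stochastic ordering of Lemma \ref{lemma.stoch.ordering}, and the single-crossing property of $p^{\omega}$ against $q$ from Lemma \ref{lemma.risk.calculation}. The only difference is cosmetic --- the paper first takes a supremum over all cut-offs $m_{a}$ (parametrised by $a\geq 0$) before identifying the optimum, whereas you go directly to the single optimal segment $\{0,1,\dots,m_{0}\}$, which streamlines the chain of inequalities without changing the argument.
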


\begin{proof}
Define
$$
m_{a} := \max (m: \sum_{l=0}^{m} q_{l}\leq a)
$$
and
$
\mathcal{D}(a, \tau) = \{ D\subset \mathbb{N}: \sum_{l\in D}\tau_{l}\leq a\},
$
for all $a\geq 0$. Note that by Lemma \ref{lemma.stoch.ordering} we have
$
\sum_{l=0}^{m_{a}} p^{\tau} \leq a$ for all $\tau$, and thus $\{0,1,\dots ,m_{a}\} \in \mathcal{D}(a, \tau)$.
Using the relation
$\|p-q\|_{1} = 2\sup_{D} \sum_{l\in D}(p_{l}- q_{l}) $ we obtain the chain of inequalities
\begin{eqnarray*}
\|q- p^{\tau}\|_{1} &=& 2 \sup_{a\geq 0}\, \sup_{D \in \mathcal{D}(a,\tau)}\, \sum_{l\in D}
(q_{l}- p^{\tau}_{l})\\
&
\geq&
2\sup_{a\geq 0} \sum_{l=0}^{m_{a}} (q_{l}- p^{\tau}_{l})
\geq
2\sup_{a\geq 0} \sum_{l=0}^{m_{a}} (q_{l} - p^{\omega}_{l})\\
&
=&2\sup_{m\geq 0} \sum_{l=0}^{m} (q_{l}-p^{\omega}_{l})=
\|q-p^{\omega}\|_{1}.
\end{eqnarray*}
The first equality follows directly form the definition of $\mathcal{D}(a,\tau)$.
The subsequent inequality restricts the supremum over all
$D\in\mathcal{D}(a,\tau)$ to one element $\{0,1,\dots, m_{a}\}$. In the second
inequality we replace the distribution $p^{\tau}$ by $p^{\omega}$ using the stochastic ordering proved in Lemma  \ref{lemma.stoch.ordering}.
In the subsequent equality we used the fact that both distributions $p$ and $q^{\omega}$ are geometric.

\end{proof}

We have completed the proof of Theorem \ref{th.benchmark.Gaussian}, showing that, also with respect to the trace norm figure of merit, a heterodyne measurement followed by the preparation of a coherent state centred at the measurement outcome yields the optimal MAP scheme for the transmission of displaced thermal states.
\end{proof}

\section{Asymptotic quantum benchmark for independent qubits}
\label{sec.qubits}
Here we tackle the benchmark problem for $n$ i.i.d. qubits in the asymptotic limit $n\rightarrow\infty$, we reformulate it in terms of the local coordinates according to the recipe described in Section~\ref{sec.stat.formulation}, and we obtain a constructive solution to it, by proving that this benchmark problem is equivalent to the one for displaced thermal states discussed and solved in  Section~\ref{sec.thermal}. The rigorous link between the two settings is provided by the LAN theory, whose main concepts are reviewed in Section~\ref{sec.lan}, and which will constitute an essential mathematical tool for the findings of this Section.

Given $n$ i.i.d. spin-$\frac12$ particles in state $\rho_{\vec{r}}^{\otimes n}$, we perform a measurement $M_{n}$ with outcome $X_{n}\sim \mathbb{P}^{M_{n}}_{\vec{r}}$
in a measure space $(\mathcal{X}_{n},\Sigma_{n})$ and reprepare a
quantum state $\omega(X_{n})$ on $\left(\mathbb{C}^{2}\right)^{\otimes n}$. The MAP channel is $L_{n}:=P_{n}\circ M_{n}$ and only such channels will be considered in our optimisation. A full MAP protocol will be  denoted by $L:=\{L_{n}:n\in\mathbb{N}\}$.

The risk (figure of merit) of the protocol at $\vec{r}$ for a given $n$ is
\begin{eqnarray}
R(\vec{r},L_{n})
&:=&
\| \rho^{\otimes n}_{\vec{r}} - \mathbb{E}(\omega(X_{n})) \|_{1}
 \\ &=&
\left\| \rho^{n}_{\vec{r}} -\int_{\mathcal{X}}\omega(x) \mathbb{P}^{M_{n}}_{\vec{r}} (dx)\right\|_{1} \nonumber
\end{eqnarray}

As a measure of the overall performance of $L_{n}$ we consider the {\it local maximum risk} a neighbourhood of $\vec{r}_{0}$
\begin{eqnarray}
R_{\max}(\vec{r}_{0},L_{n}) :=
\sup_{\|\vec{r}-\vec{r}_{0}\|\leq n^{-1/2+\epsilon}} R(\vec{r},L_{n})
\end{eqnarray}
whose asymptotic behaviour is
\begin{equation}
R(\vec{r}_{0}, L) :=
\underset{n\to\infty}{\lim\sup} \,
 R_{\max}(\vec{r}_{0},L_{n})
\end{equation}
\begin{definition}
The local minmax risk at $\vec{r}_{0}$ is defined by
$$
R_{\min\!\max}(\vec{r}_{0}):=\underset{n\to\infty}{\lim\sup} \, \inf_{L_{n}} R_{\max}(\vec{r}_{0},L_{n}).
$$
A protocol $L$ is called locally mimimax at $\vec{r}_{0}$ if
$
R(\vec{r}_{0}, L)=R_{\min\!\max}(\vec{r}_{0}).
$
\end{definition}

In Theorem \ref{th.main} we show that the following sequence of MAP maps is locally minmax.

\medskip

{\it Adaptive measure and prepare protocol:}

\begin{enumerate}
\item
The state is first localised in a neighbourhood
$\|\vec{r}-\vec{r}_{0}\|\leq n^{-1/2+\epsilon}$ of $\vec{r}_{0}$ by using a small proportion of the systems (see Section~\ref{sec.lan}).
\item
The remaining spins are mapped by the channel $T_{n}$ close to the classical-quantum Gaussian state  $\Phi_{\vec{u}} \otimes N_{\vec{u}}$ as in  Theorem
\ref{th.qlan.qubits}.
\item
A heterodyne measurement together with an observation of the classical
component give an estimator $\hat{\vec{u}}$ of $\vec{u}$.
\item
A coherent state $|\alpha_{\hat{\vec{u}}}\rangle$ is prepared with the mean equal to the outcome of the heterodyne measurement. The inverse channel $S_{n}$ is applied to the coherent state and the classical part.
\end{enumerate}
The procedure is illustrated in the commutative diagram \eqref{comm.diagram} drawn in the introduction: the upper line represents the MAP steps which are realised through the alternative route described in the lower part. Thus $M_{n}= H \circ T_{n}$ and
$P_{n}:= S_{n}\circ P$.


\begin{theorem}\label{th.main}
The sequence of MAP maps
$$
L^{*}_{n}:= (M_{n}:= H \circ T_{n}, \,P_{n}:= S_{n}\circ P)
$$
is locally asymptotically minmax. The minmax risk at $\vec{r}_{0}$ is equal to the benchmark for the MAP problem of a displaced thermal equilibrium state with parameter $s=(1-r_{0})/(1+r_{0})$ (see Theorem \ref{th.benchmark.Gaussian})
$$
R_{\min\!\max}(\vec{r}_{0})= R (\vec{r}_{0},L^{*})= R_{\min\!\max}(s)
$$
\end{theorem}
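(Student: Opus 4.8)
The plan is to establish two matching bounds—an achievability (upper) bound and a converse (lower) bound—both of which equal $R_{\min\!\max}(s)$, the benchmark risk computed in Theorem~\ref{th.benchmark.Gaussian} for the displaced thermal state with $s=(1-r_0)/(1+r_0)$. The whole strategy rests on transporting the benchmark problem for $n$ i.i.d.\ qubits, localised around $\vec{r}_0$, to the Gaussian shift model $\mathcal{G}_n$ via the channels $T_n,S_n$ of Theorem~\ref{th.qlan.qubits}, whose approximation error vanishes uniformly over the local ball $\|\vec{u}\|\le n^\epsilon$.

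\medskip

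\emph{Achievability.} First I would analyse the explicit protocol $L^*_n=(H\circ T_n,\,S_n\circ P)$ depicted in diagram~\eqref{comm.diagram}. The output state is $\omega(X_n)=S_n\bigl(|\alpha_{\hat{\vec{u}}}\rangle\langle\alpha_{\hat{\vec{u}}}|\otimes\delta_{\hat u_3}\bigr)$, obtained by heterodyning the Gaussian surrogate and repreparing a coherent state. Using the two triangle-inequality insertions
\begin{equation*}
\|\rho^n_{\vec{u}}-\mathbb{E}(\omega(X_n))\|_1
\le \|\rho^n_{\vec{u}}-S_n(\Phi_{\vec{u}}\otimes N_{\vec{u}})\|_1
+\|S_n\bigl(\Phi_{\vec{u}}\otimes N_{\vec{u}}-\mathbb{E}(P\circ H(\cdot))\bigr)\|_1,
\end{equation*}
together with the facts that $S_n$ is a channel (hence trace-norm contractive) and that $T_n(\rho^n_{\vec{u}})$ is $L_1$-close to $\Phi_{\vec{u}}\otimes N_{\vec{u}}$ uniformly in $\vec{u}$, one reduces the $n$-qubit risk to the Gaussian risk $\|P\circ H(\Phi_{\vec{u}})-\Phi_{\vec{u}}\|_1$ plus terms tending to zero. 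By the covariance established in Section~\ref{sec.thermal}, the latter equals $R^*(s)$ of Lemma~\ref{lemma.risk.calculation}, independently of $\vec{u}$. Taking $\sup_{\|\vec{u}\|\le n^\epsilon}$ and then $\limsup_n$ yields $R(\vec{r}_0,L^*)\le R_{\min\!\max}(s)$. Care is needed because the classical component $N_{\vec{u}}$ carries the $u_z$ parameter, which must be estimated and fed back through $\delta_{\hat u_3}$; I expect this to contribute only a lower-order correction since the classical Gaussian shift is estimated at the parametric rate and reprepared exactly.

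\medskip

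\emph{Converse.} For the lower bound $R(\vec{r}_0,L)\ge R_{\min\!\max}(s)$ for \emph{every} admissible protocol $L$, the idea is to run the correspondence in reverse: given any qubit MAP sequence $(M_n,P_n)$, I would use $S_n$ and $T_n$ to build from it a MAP scheme on the Gaussian model, namely $\tilde M:=M_n\circ S_n$ and $\tilde P:=T_n\circ P_n$. Because $\Delta(\mathcal{Q}_n,\mathcal{G}_n)\to0$, the Gaussian risk of this induced scheme differs from the qubit risk of $(M_n,P_n)$ by a quantity that vanishes uniformly over the local ball. Since Theorem~\ref{th.benchmark.Gaussian} asserts that \emph{no} MAP scheme on the displaced thermal model can beat $R_{\min\!\max}(s)$, the induced Gaussian scheme obeys this bound, and the approximation transfers it back to the qubit side in the $\limsup$. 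Combined with achievability, this pins $R_{\min\!\max}(\vec{r}_0)=R_{\min\!\max}(s)$.

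\medskip

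\emph{The main obstacle} will be making the converse rigorous at the level of the risk rather than merely the model distance. The difficulty is that $R_{\min\!\max}(s)$ in Theorem~\ref{th.benchmark.Gaussian} is a supremum over the \emph{full} parameter range $z\in\mathbb{C}$, whereas the transferred qubit problem lives only on the truncated ball $\|\vec{u}\|\le n^\epsilon$; one must check that restricting the Gaussian benchmark to this growing ball does not lower the optimal risk in the limit, which follows because the optimal Gaussian risk is attained by a \emph{covariant} scheme whose risk is constant in $z$, so any ball of positive radius already sees the full value $R^*(s)$. A second delicate point is that composing a fixed qubit scheme with $S_n$ produces a \emph{sequence} of Gaussian schemes rather than a single fixed one, so the converse must invoke the minmax bound of Theorem~\ref{th.benchmark.Gaussian} uniformly along the sequence; this is legitimate precisely because that bound holds for every MAP scheme, not just asymptotically. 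Once these uniformity issues are handled, the two bounds close and the theorem follows.
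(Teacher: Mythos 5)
Your achievability argument is essentially the paper's: contractivity of $S_n$ and $T_n$ plus the uniform LAN approximation reduce the qubit risk of $L^*_n$ to the covariant Gaussian risk $\|\Phi-\tilde\Phi\|_1=R^*(s)$, and the classical component indeed contributes nothing. That half is fine.

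The converse, however, has a genuine gap, and it sits exactly at the point you flag as the ``main obstacle.'' Your proposed resolution --- that because the optimal Gaussian scheme is covariant with constant risk, ``any ball of positive radius already sees the full value $R^*(s)$'' --- does not work. The lower bound of Theorem~\ref{th.benchmark.Gaussian} is proved by a covariance averaging argument over the (non-compact) displacement group, and it is a statement about $\sup_{z\in\mathbb{C}}$; it says nothing about the minmax risk of the model \emph{restricted} to a bounded ball. On a bounded ball a non-covariant scheme can in principle exploit the prior localisation and beat $R^*(s)$, so the constancy of the \emph{optimal covariant} scheme's risk is irrelevant to bounding an \emph{arbitrary} scheme from below on the ball. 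Since your induced Gaussian scheme $\tilde M=M_n\circ S_n$, $\tilde P=T_n\circ P_n$ is only controlled (i.e.\ close in risk to the qubit scheme) for $\|\vec{u}\|\le n^{\epsilon}$ --- outside that range $S_n$ gives no guarantee --- you cannot invoke the unrestricted Theorem~\ref{th.benchmark.Gaussian} against it.

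The paper closes this gap constructively: given a hypothetical qubit scheme $\tilde L$ beating $R_{\min\!\max}(s)$, it builds a Gaussian MAP scheme defined on \emph{all} of $\mathbb{C}$ by first tapping off a small fraction of $\Phi_{\vec{u}}$ with a beamsplitter of reflectivity $\delta$, heterodyning the reflected beam to localise $\vec{u}$ within a radius $L$ with probability $1-\epsilon_2$, displacing the transmitted state back toward the origin so that $S_n$ applies, running $S_n$, then $\tilde L$, then $T_n$, and finally amplifying to undo the $t=\sqrt{1-\delta^2}$ attenuation. The total excess risk is $\|\Phi-\Phi'\|_1+\epsilon_2+o(1)$, which can be made arbitrarily small by sending $\delta,\epsilon_2\to 0$, yielding the contradiction with the genuinely global Theorem~\ref{th.benchmark.Gaussian}. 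Some such localisation-plus-amplification step (or, equivalently, a proof that the minmax risk of the Gaussian model restricted to balls of radius $n^{\epsilon}\to\infty$ converges to the unrestricted one) is indispensable; without it your converse does not go through.
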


\begin{proof}
The idea is that the spins problem can be transferred the Gaussian one with vanishing difference in the risks. We first show that
$R_{\max} (\vec{r}_{0},L^{*})\leq R_{\min\!\max}(s)$ and then argue that a strict inequality would be in contradiction with the optimality of $R_{\min\!\max}(s)$.

By Lemma 2.1 in \cite{Guta&Janssens&Kahn} the first step of the procedure has
$o(1)$ failure probability, in which case the output state may be very different from the desired one. However since the norm distance between states is bounded by $2$, an application of the triangle inequality shows that this has no influence on the asymptotic risk. Thus we can consider that $\vec{r}$ is in the local neighbourhood
$\|\vec{r}-\vec{r}_{0}\|\leq n^{-1/2+\epsilon}$ and we can apply the
LAN machinery of Section~\ref{sec.lan}.

The (average) output state is
$$
\omega_{n}:=\mathbb{E} (\omega(X_{n}))=
S_{n} \left(
\mathbb{E} \left[
|\alpha_{\hat{\vec{u}}}  \rangle\langle\alpha_{\hat{\vec{u}}}| \right]  \otimes N_{\vec{u}} \right)
$$
where the expectation on the right side is over the heterodyne measurement outcomes. Since measurements and preparations are contractive we have
\begin{equation}\label{eq.bound1}
\|
\mathbb{E} \left[
|\alpha_{\hat{\vec{u}}}  \rangle\langle\alpha_{\hat{\vec{u}}}| \right] -
\tilde{\Phi}_{\vec{u}} \|_{1}\leq \|T_{n}(\rho^{n}_{\vec{u}}) - \Phi_{\vec{u}} \otimes N_{\vec{u}}\|_{1} =
o(1)
\end{equation}
where $\tilde{\Phi}_{\vec{u}}$ is the displaced thermal equilibrium state of variance
$1+1/2r_{0}$ obtained by heterdyning $\Phi_{\vec{u}}$, preparing the coherent state
$|\alpha_{\hat{\vec{u}}}\rangle \langle \alpha_{\hat{\vec{u}}}|$ and averaging over
$(\hat{u}_{1},\hat{u}_{2})$.

On the other hand, by using contractivity of $S_{n}$
\begin{eqnarray}\label{eq.bound2}
& &\| \rho^{n}_{\vec{u}}- \omega_{n} \|_{1} \nonumber
 \\ & & \quad \leq \| \rho^{n}_{\vec{u}}-S_{n}(\Phi_{\vec{u}} \otimes N_{\vec{u}}) \|_{1}
+
\| \Phi_{\vec{u}} - \mathbb{E} \left[
|\alpha_{\hat{\vec{u}}}  \rangle\langle\alpha_{\hat{\vec{u}}}| \right] \|_{1}\nonumber\\
& & \quad =
\| \Phi_{\vec{u}} - \mathbb{E} \left[
|\alpha_{\hat{\vec{u}}}  \rangle\langle\alpha_{\hat{\vec{u}}}| \right] \|_{1}+o(1).
\end{eqnarray}

From \eqref{eq.bound1} and \eqref{eq.bound2} we get
$$
R_{\max}(\vec{r}_{0},L^{*}) \leq \|\Phi -\tilde{\Phi}\|_{1}= R_{\min\!\max}(s).
$$
As shown in Theorem \ref{th.benchmark.Gaussian}, the right side is the MAP benchmark for mixed Gaussian states with optimal procedure consisting of heterodyne measurement followed by the preparation of a coherent state with mean equal to the outcome of the measurement.

Now we show that no MAP strategy can achieve a lower asymptotic risk at $\vec{r}_{0}$ than  $R_{\min\!\max}(s)$. Indeed suppose that
$\tilde{L}:= (\tilde{M}_{n},\tilde{P}_{n})$ is a sequence of procedures for spins  which satisfies $R_{\max}(\vec{r}_{0},\tilde{L}) < R_{\min\!\max}(s)$. Then as shown below, we could construct a MAP procedure for $\Phi_{\vec{u}}$ which is strictly better that the optimal one, which is impossible.

Let $\delta>0$ be a small number to be fixed later. We mix $\Phi_{\vec{u}}$ with the thermal state of the same temperature $\Phi_{0}$, through a beamsplitter with small reflectivity $r=\delta$ and transmissivity $t= \sqrt{1-\delta^{2}}$ and obtain the output
$$
\Phi_{t \vec{u}}\otimes \Phi_{r\vec{u}}.
$$
By heterodyning $\Phi_{r \vec{u}}$ we obtain an estimator $\vec{u}_{0}$ such that
$$
\mathbb{P} \left[ \|\vec{u} - \vec{u}_{0}\| \geq L \right] \leq \epsilon_{2}/2
$$
for some (large) $L$ which increases when $\delta,\epsilon_{2}\downarrow 0$.

We displace the unmeasured component
$\Phi_{t \vec{u}}$ by $-t\vec{u}_{0}$ so that from now on we can assume that
$\|\vec{u}\|\leq L$ (with an $\epsilon_{2}$ loss of risk).

Now we choose
$n$ large enough so that $L \leq n^{\epsilon}$. Thermal states with such displacements are in the range of applicability of the inverse map $S_{n}$ in LAN Theorem \ref{th.qlan.qubits}.  We apply the channel $S_{n}$ mapping the Gaussian state to an i.i.d. spins state, with the small difference that the classical parameter is now fixed to zero, i.e. the spins will be prepared in a state with Bloch vector of length $r_{0}$. By Theorem \ref{th.qlan.qubits} we have uniformly in $\|\vec{u}\|<L$
$$
\|S_{n}(\Phi_{t\vec{u}}\otimes N_{0}) - \rho^{n}_{t\vec{u}}\|_{1} = o(1).
$$

Next we measure and re-prepare the spins using the procedure $\tilde{L}$ and obtain  a state $\omega_{n}:= \mathbb{E}\left[\omega(X_{n})\right]$ such that
$$
\|\omega_{n} - \rho^{n}_{t\vec{u}}\|_{1} \leq
R_{\max}(\vec{r}_{0},\tilde{L}) +o(1).
$$

Finally, we apply the map $T_{n}$ to the output state $\omega(X_{n})$ and keep only the quantum part $T^{(q)}(\omega_{n})$. By the same contractivity argument as before we have
\begin{eqnarray}
\|\Phi_{t\vec{u}} -  T^{(q)}_{n} (\omega_{n})\|_{1} &\leq&
\|\Phi_{t\vec{u}} - T^{(q)}_{n} (\rho^{n}_{t\vec{u}})\|_1+
\|\omega_{n} - \rho^{n}_{t\vec{u}}\|_{1}\nonumber\\
&\leq& R_{\max}(\vec{r}_{0},\tilde{L}) + o(1).
\end{eqnarray}

At this point we can directly compare out state $T^{(q)}_{n} (\omega_{n})$ with the target $\Phi_{\vec{u}}$, or use a quantum amplifier to make up for the loss in amplitude induced by the initial use of a beam splitter. We follow the second  `unbiased' line. The amplifier can be described by a linear transformation on the mode $a$ of the oscillator together with an ancillary mode $b$ prepared in the vacuum. The output modes are
\begin{eqnarray*}
c_{1}&:=& \sqrt{t^{-1}} a+ \sqrt{t^{-1}-1} b^{\dagger}\\
c_{2}&:=&\sqrt{ t^{-1}} b+ \sqrt{t^{-1}-1} a^{\dagger}.
\end{eqnarray*}
Let $A$ denote the channel mapping the state of the mode $a$ to that of the amplified mode $c_{1}$. Then
$$
A\Phi_{t\vec{u}}= \Phi^{\prime}_{\vec{u}}
$$ where $\Phi^{\prime}$ is a thermal equilibrium state with variance
$$
\tilde{V}= t^{-2}/2r_{0} +(t^{-1}-1)/2.
$$

The final distance estimate (conditional on successful
localisation of $\Phi_{\vec{u}}$) is
\begin{eqnarray}
\| \Phi_{u} - A\circ T^{(q)}_{n}(\omega_{n})\|_{1} \!\!&\leq&\!
\|\Phi_{\vec{u}}- \Phi^{\prime}_{\vec{u}}\|_{1} +
\| \Phi_{t\vec{u}} - T_{n}(\omega_{n})\|_{1}\nonumber\\
\!\!&\leq&\!\|\Phi-\Phi^{\prime}\|_{1} + R_{\max}(\vec{r}_{0},\tilde{L}) +o(1). \nonumber \\ & &
\end{eqnarray}

Since we assumed that $R_{\max}(\vec{r}_{0},\tilde{L})< R_{\min\!\max}(s)$, it is enough to choose $\epsilon_{2}$ and $\|\Phi-\Phi^{\prime}\|_{1}$ small enough to obtain a contradiction with the optimality of $R_{\min\!\max}(s)$.  This can be done by choosing $\delta, \epsilon_{2}$ small enough and $n$ large enough.

\end{proof}

\section{Discussion and concluding remarks}
\label{sec.discussion}

The problem of finding an optimal MAP reconstruction scheme for a family of quantum states has attracted significant attention due to its relevance
in establishing fidelity benchmarks for teleportation and state storage. In the case of a family of displaced thermal (and coherent) states with unknown displacement the problem was solved in  \cite{Hammerer,Owari} and the optimal procedure is the heterodyne measurement followed by the preparation of a coherent state.
We showed that the same MAP procedure is again optimal for this family of states, with a more natural figure of merit - the trace norm distance. Moreover, in the case of i.i.d. mixed qubit states, the benchmark problem can be solved asymptotically by mapping it to the previous problem, using LAN theory.

Interestingly, the same heterodyne measurement is also optimal from the point of view of state estimation \cite{Holevo}. On the other hand, in \cite{Calsamiglia} it was shown that for a particular family of states in $\mathbb{C}^{2}$, the optimal MAP scheme involves a measurement which is different from the optimal one for state estimation. One of our motivations was to see whether this peculiarity survives in the asymptotic limit, and the conclusion is that for large $n$ the estimation and benchmarking can be performed optimally {\it simultaneously}, i.e. their measurement parts are identical.


Another difference between benchmarking and estimation pointed out in
\cite{Calsamiglia} refers to the preparation part of the protocols. More exactly, it turns out that in the case of a special family of one qubit pure states, the optimal re-prepared state {\it does not} belong to the family. However, our asymptotic benchmark for qubits can be easily extended to treat the case of pure rather than mixed states and the result is that  {\it asymptotically}, the optimal re-prepared state {\it is} in the original model. Thus the effect pointed out in \cite{Calsamiglia} is due to the particular geometry of the states space in the one sample situation, which `linearises' asymptotically. To briefly explain our claim, note that asymptotically, the parameter space reduces effectively to a small interval and the tangent space approximation kicks in. Then the problem is transferred to that of finding the benchmark for a family of coherent states {\it on a line}, where the solution is a homodyne measurement followed by the preparation of a coherent state {\it in the family}.

Oh the other hand, we have seen that in the case of mixed states, the re-prepared state is pure, hence not in the family, even in the asymptotic framework.  However this fact is not surprising as it happens already in the classical case: if $X$ is a random variable with distribution $\mathbb{P}$ then the best `re-preparation' is the `pure state' represented by the $\delta$ measure
$\delta_{X}$ leading to a trivial benchmark. If $\mathbb{P}$ is not a $\delta$ measure itself then the re-prepared `state' $\delta_{X}$ is outside the model.


\acknowledgments{
M\u{a}d\u{a}lin Gu\c{t}\u{a} was supported by the EPSRC Fellowship EP/E052290/1. We thank A. Serafini for useful discussions.}

\bigskip

\appendix*

\setcounter{equation}{0}

\section{Explicit construction of the channels $T_{n},S_{n}$}\label{secApp}

We give here a brief review of the main ideas involved in the proof
of Theorem \ref{th.qlan.qubits} and the description of channels $T_{n},S_{n}$.

On $\left(\mathbb{C}^{2}\right)^{\otimes n}$ we have two {\it commuting}
unitary group representations
\begin{eqnarray*}
\pi_{n}(U)&:&\psi_{1}\otimes \dots\otimes \psi_{n} \mapsto
U \psi_{1}\otimes \dots \otimes U\psi_{n} ,\\
\tilde{\pi}(t) &:&  \psi_{1}\otimes \dots\otimes \psi_{n}\mapsto
\psi_{t^{-1}(1)}\otimes \psi_{t^{-1}(n)}
\end{eqnarray*}
where $U\in SU(2)$ and $t\in S(n)$, with $S(n)$ denoting the symmetric group. By Weyl's Theorem, the representation space decomposes into a direct sum of tensor products
\begin{equation}\label{eq.decomposition}
\left( \mathbb{C}^{2}\right)^{\otimes n} = \bigoplus_{j=0, 1/2}^{n/2} \mathcal{H}_{j} \otimes \mathcal{H}^{j}_{n},
\end{equation}
where $j$ is half-integer,  $\mathcal{H}_{j} \cong \mathbb{C}^{2j+1}$ is an irreducible representation of $SU(2)$, and $\mathcal{H}^{j}_{n}\cong \mathbb{C}^{n_{j} }$ is the irreducible representation of $S(n)$.
Since the density matrix $\rho^{n}_{\vec{u}}$ is invariant under
permutations, it has a block diagonal form
\begin{equation}
\label{blocks}
\rho^{n}_{\vec{u}} = \bigoplus_{j=0, 1}^{n}  p^{n}_{\vec{u}}(j) \rho^{n}_{j,\vec{u}} \otimes
\frac{\mathbf{1}}{n_{j}} .
\end{equation}

This can be interpreted as being given a random variable $J$ with distribution
$p^{n}_{\vec{u}}(j)$ and conditionally on $J=j$, a quantum state
$\rho^{n}_{j,\vec{u}}$. The classical and quantum components of
the `data' will be `processed' by randomising $J$ and
mapping $\rho^{n}_{j,\vec{u}}$ through a quantum channel.

{\it Classical component.}--- Each block is an eigenspace of  the total spin operator
$L^{2}= L_{x}^{2}+L_{y}^{2}+L_{z}^{2}$ with eigenvalue $4j(j+1)\approx (2j)^{2}$.
As the big Bloch ball argument  suggests, the main contribution to $L^{2}$
comes from $L_{z}^{2}$ so that the distribution $p^{n}_{\vec{u}}(j)$ can be approximated by the binomial
$$
p^{n}_{\vec{u}}(j) \approx  {n\choose j+n/2}
\left(\frac{1+r}{2}\right)^{j+n/2}  \left(\frac{1-r}{2}\right)^{n/2-j}
$$
with $r=r_{0}+u_{z}$. The fact that $p^{n}_{\vec{u}}(j)$ converges to
$N_{\vec{u}}$ follows now from the classical version of LAN \cite{LeCam} for i.i.d. samples of binary variables. We first constructs the rescaled variable
$$
G_{n}:=\sqrt{n} \left(\frac{L}{n} - r_{0} \right)
\overset{\mathcal{D}}{\longrightarrow} N(u_{z}, 1-r_{0}),
$$
but since $G_{n}$ has a discrete probability distribution, we need to `smooth' it by randomising with e.g. a Gaussian Markov kernel of variance
$1/(2\sqrt{n})$
$$
K_{n,j}(x) :=  (n^{1/4}/\sqrt{\pi})\exp\left(-\sqrt{n} (x-G_n(j))^2\right).
$$

In this way the convergence in distribution is converted into strong
($L_{1}$) convergence.

{\it Quantum component.}---
Conditionally on obtaining $J=j$ in the which-block measurement, we are left with a quantum state $\rho^{n}_{j,\vec{u}}$ on $\mathcal{H}_{j}$.
The action of $T_{n}$ will be to imbed it into the quantum oscillator
space by the isometry $V_{j}:\mathcal{H}_{j}\to \ell^{2}(\mathbb{Z}) $ define below.

Let $\pi_{j}$ denote the irreducible representation of $SU(2)$ on
$\mathcal{H}_{j}$ and denote  its generators by $L_{j,a}= \pi_{j}(\sigma_{a})$.
The space has an orthonormal
basis $\left\{|j,m\rangle , m=-2j, \dots, 2j\right\}$  such that
$$
L_{j,z}  |j,m \rangle  = m |j,m\rangle.
$$

The Quantum Central Limit Theorem suggests that the properly normalised operators $L_{j, \pm} := L_{j,x} \pm i L_{j,y}$ converge to the annihilation and creation operators $a,a^{\dagger}$ when $j\approx n r_{0}/2\to\infty$. Since they act as ladder operators
\begin{eqnarray*}
&&
J_{j, +} |j,m\rangle = \sqrt{j-m} \sqrt{ j+m+1} \, |j,m+1\rangle ,\\
&&
J_{j, -} |j,m\rangle = \sqrt{j-m+1} \sqrt{ j+m}  \,  |j,m-1 \rangle .
\end{eqnarray*}
it is natural to define the isometry
$$
V_{j}:|m,j\rangle \mapsto |j-m\rangle
$$
defining the embedding (channel)
$$
T_{j}: \rho^{n}_{j,\vec{u}}\longmapsto V_{j}  \rho^{n}_{j,\vec{u}} V_{j}^{j}.
$$

Putting everything together we obtain the channel $T_{n}$
$$
T_{n} : \rho^{n}_{\vec{u}}
\longmapsto
\sum_{j} p^{n}_{\vec{u}} (j) K_{n,j} \otimes V_{j}  \rho^{n}_{j,\vec{u}} V_{j}^{j}
$$
which implements the convergence to Gaussian in Theorem \ref{th.qlan.qubits}.

The channel $S_{n}$ is basically an inverse of $T_{n}$: the normal distribution is discretised to produce the distribution $p^{n}_{\vec{j}}$ and the quantum Gaussian state is compressed to the first $2j+1$ levels and mapped to $\mathcal{H}_{j}$ with the co-isometry $V_{j}^{\dagger}$. For more details on the proof we refer to \cite{Guta&Janssens&Kahn}.



\end{document}